\newtheorem{thrm}{Theorem}
\newtheorem{lmm}[thrm]{Lemma}
\newtheorem{dfn}[thrm]{Definition}
\theoremstyle{definition}
\newtheorem{rmrk}{Remark}
\DeclareMathOperator{\End}{End}
\DeclareMathOperator{\id}{id}
\DeclareMathOperator{\tr}{tr}
\DeclareMathOperator{\Tr}{Tr}
\DeclareMathAlphabet{\mathpzc}{OT1}{pzc}{m}{it}
\newcommand{\rmi}{\mathrm{i}}
\newcommand{\rmd}{\mathrm{d}}
\newcommand{\symped}[1]{\accentset{S}{#1}}
\newcommand{\sympman}{\mathcal{M}}
\newcommand{\bund}{\mathcal{E}}
\newcommand{\ebund}{{\End(\bund)}}
\newcommand{\wbund}{W}
\newcommand{\dotmark}{\mathcal{X}}
\newcommand{\wbof}[1]{{\wbund_{#1}}}
\newcommand{\wbc}{\wbof{\mathbb{C}}}
\newcommand{\wbb}{\wbof{\bund}}
\newcommand{\wbe}{\wbof{\ebund}}
\newcommand{\wbdot}{\wbof{\dotmark}}
\newcommand{\wdof}[1]{{\wbund^D_{#1}}}
\newcommand{\wddot}{\wdof{\dotmark}}
\newcommand{\muof}[1]{{\mu^{#1}}}
\newcommand{\muc}{\muof{\mathbb{C}}}
\newcommand{\mue}{\muof{\ebund}}
\newcommand{\mudot}{\muof{\dotmark}}
\newcommand{\qmof}[1]{{Q_{#1}}}
\newcommand{\qmc}{\qmof{\mathbb{C}}}
\newcommand{\qme}{\qmof{\ebund}}
\newcommand{\qmb}{\qmof{\bund}}
\newcommand{\qmdot}{\qmof{\dotmark}}
\newcommand{\qminvof}[1]{{Q^{-1}_{#1}}}
\newcommand{\qminvc}{\qminvof{\mathbb{C}}}
\newcommand{\qminve}{\qminvof{\ebund}}
\newcommand{\qminvb}{\qminvof{\bund}}
\newcommand{\qminvdot}{\qminvof{\dotmark}}
\newcommand{\rsof}[1]{{r^{#1}}}
\newcommand{\rsc}{\rsof{\mathbb{C}}}
\newcommand{\rse}{\rsof{\ebund}}
\newcommand{\rsb}{\rsof{\bund}}
\newcommand{\rsdot}{\rsof{\dotmark}}
\newcommand{\gamdot}{{\gamma_{\dotmark}}}
\newcommand{\hdot}{H_\dotmark}
\newcommand{\gtder}{\Delta_t}
\newcommand{\brof}[1]{{R^{#1}}}
\newcommand{\brc}{\brof{\mathbb{C}}}
\newcommand{\bre}{\brof{\ebund}}
\newcommand{\brdot}{\brof{\dotmark}}
\newcommand{\gbr}{\gend{R}}
\newcommand{\gbrof}[1]{{\gbr^{#1}}}
\newcommand{\gbrc}{\gbrof{\mathbb{C}}}
\newcommand{\gbre}{\gbrof{\ebund}}
\newcommand{\gbrdot}{\gbrof{\dotmark}}
\newcommand{\connsymp}{\partial^{S}}
\newcommand{\connbund}{\partial^{\bund}}
\newcommand{\connend}{\partial^{\ebund}}
\newcommand{\gambund}{\Gamma^{\bund}}
\newcommand{\curvbund}{R^{\bund}}
\newcommand{\curvsymp}{\symped{R}}
\newcommand{\gcirc}{\stackrel{\sim}{\circ}}
\newcommand{\gend}[1]{\accentset{\sim}{#1}}
\newcommand{\gconn}{\gend{\partial}}
\newcommand{\ccomm}{\stackrel{\circ}{,}}
\newcommand{\gcomm}{\stackrel{\gcirc}{,}}
\newcommand{\gcommt}[1]{\stackrel{\gcirc_{#1}}{,}}
\newcommand{\ibh}{\frac{\rmi}{h}}
\newcommand{\adjact}{\mathfrak{T(g)}}
\newcommand{\adjacthmtp}{\mathfrak{T(g(}t))}
\begin{document}

\title{\textsc{Remarks on generalized Fedosov algebras}}
\date{}
\author{Micha{\l} Dobrski\footnote{michal.dobrski@p.lodz.pl}
\\
\small
\emph{Institute of Physics}
\\
\small
\emph{Lodz University of Technology}
\\
\small
\emph{W\'olcza\'nska 219, 90-924 {\L}\'od\'z, Poland}}
\maketitle
\abstract{The variant of Fedosov construction based on fairly general fiberwise product in the Weyl bundle is studied. We analyze generalized star products of functions, of sections of endomorphisms bundle, and those generating deformed bimodule structure as introduced previously by Waldmann. Isomorphisms of generalized Fedosov algebras are considered and their relevance for deriving Seiberg-Witten map is described. The existence of the trace functional is established. For star products and for the trace functional explicit expressions, up to second power of deformation parameter, are given. The example of symmetric part of noncommutativity tensor is discussed as a case with possible field-theoretic application.}
\section{Introduction}
Fedosov quantization \cite{fedosovart}, \cite{fedosovbook} is the beautiful construction providing powerful tools for studying various aspects of deformation quantization on symplectic manifolds. Quite recently, in author's previous works \cite{dobrski-ncgr},\cite{dobrski-ncgrreal}, it was advocated that Fedosov formalism is well suited for building global, geometric field theories on noncommutative spacetimes. The reason for such claim is that Fedosov theory admits generalization producing geometric star products in endomorphism bundles of vector bundles, and this is the place where (some) gauge fields can be put to live. Moreover, there exist convenient isomorphism theory, as well as trace functional construction for Fedosov quantization. These structures turn out to encode local and global versions of Seiberg-Witten map \cite{dobrski-sw}, \cite{dobrski-ncgr}, which can be viewed as a quite natural consequence of Fedosov formalism (without postulating it separately). On the other hand, generic Fedosov construction seems to be in some aspects still too rigid for field-theoretic applications. This is because it is ``canonical'' -- taking minimal input of geometric data (symplectic form and symplectic connection) it produces simplest geometric deformation quantization. But such canonicality means that the construction ``knows nothing'' about other structures on underlying manifold, eg. about metric. 
The aim of the present paper is to introduce into Fedosov formalism fairly general additional degrees of freedom, which can be interpreted in various ways, possibly also in field-theoretic context. This is achieved by modifying core structure of Fedosov construction, namely the star product $\circ$ in the fibers of tangent bundle. In our setting it could be multiplication different from the Moyal one. There were various approaches investigating some specific non-Moyal fiberwise products, eg. \cite{tosiekacta3}, \cite{bordwald}, \cite{dolglyaksharap}, \cite{bering}. Here, we especially follow that of \cite{tosiekacta3}, but in more general context. The construction is carried out to further point then usually -- the isomorphism theory and the trace functional are studied, as well as generalizations to products involving sections of vector and endomorphism bundles, as in \cite{waldmann}. Being primarily interested in further applications, we provide explicit formulas for generalized star-products and trace functional, up to second power of deformation parameter.    

It should be stressed that the present paper does not aim at providing most general variant of Fedosov quantization. The theory developed by Fedosov is very well understood now and important generalizations in directions not covered here were analyzed. Notable example is related to quantization of irregular Poisson manifolds. In \cite{cattfldr} elements of Fedosov's technology were used for explicit globalization of Kontsevich star-product. On the other hand, in \cite{lyakhbrst} the Fedosov construction was identified with BRST quantization of certain constrained system and this observation, together with earlier ideas from \cite{lyakhpi}, were later used in \cite{lyakhqs} for quantization of large class of irregular Poisson manifolds. Here, we are not reaching beyond original symplectic setting and concentrate only on exploring consequences of modifying fiberwise $\circ$ product.

The paper is organized as follows. First (section 2), basic structure of generalized Fedosov construction is described. The existence of star-products for factors of various type (function, endomorphism, vector) is established. Then (section 3), the isomorphism theory is analyzed in some detail, as it is indispensable component for deriving Seiberg-Witten relations (discussed in the same section) and for constructing trace functional. The latter one is studied in section 4. As has been mentioned before, the special attention is paid to explicit formulas (including one for the trace functional), which are presented in section 5. It also contains example which illustrates single, very specific interpretation of the present generalization -- introduction of symmetric part of noncommutativity tensor. Some concluding remarks are given in section 6.

\section{Generalized Fedosov construction}
In this section the variant of Fedosov construction is described. The reader interested in studying conceptual structure of Fedosov quantization (both geometric and algebraic) is referred to \cite{emmrwein}, \cite{xu}, \cite{farkas}, \cite{tosiek}. Here, taking original formulation of \cite{fedosovbook} together with generalization developed in \cite{waldmann}, we extend them using methods of \cite{tosiekacta3}. The main concept of this extension can be summarized as follows -- replace fiberwise Moyal product $\circ$ in the Weyl bundle by some other product $\gcirc$, and allow it to vary across the fibers. As all fiberwise products must be equivalent to Moyal product, the required $\gcirc$ can be introduced by choosing an isomorphism $g$ to the standard Moyal algebra. This technique was used in \cite{tosiekacta3} for the description of deformation quantizations originating in different operator orderings. We make use of it to introduce fairly general fiberwise product $\gcirc$. 
\subsection{Weyl bundles and their basic properties}
The initial data for the original Fedosov construction are given by the following structures. Let $(\sympman, \omega, \connsymp)$ be the Fedosov manifold \cite{gelfand,bielgutt} of dimension $2n$, for which components of symplectic curvature tensor will be denoted by $\tensor{\curvsymp}{^i_{jkl}}$. Consider a finite dimensional complex vector bundle $\bund$ over $\sympman$. Let $\connbund$  be a linear connection in $\bund$ and $\connend$ connection induced by $\connbund$ in endomorphisms bundle $\ebund$. We are going to use $\curvbund_{kl}$ to denote components of curvature of $\connbund$.  They are given by the local formula $\curvbund_{kl}=\frac{\partial \gambund_l}{\partial x_k}-\frac{\partial \gambund_k}{\partial x_l}+[\gambund_k,\gambund_l]$ for $\connbund=d+\gambund$ with $\gambund=\gambund_i \rmd x^i$ being local connection $1$-form. As it will be seen, for the purpose of our generalization some extra fields will be required.

Introduce over $\sympman$ the Weyl bundle $\wbc$. Also, let us have $\wbb=\wbc \otimes \bund$ and $\wbe=\wbc \otimes \ebund$. Sections of $\wbc \otimes \Lambda$, $\wbb \otimes \Lambda$ and $\wbe \otimes \Lambda$ can be locally written as a formal sums
\begin{equation}
\label{wsect}
a(x,y,h)=\sum_{k,p,q \geq 0} h^k a_{k,i_1 \dots i_p j_1 \dots j_q}(x)y^{i_1} \dots y^{i_p} \rmd x^{j_1} \wedge \dots \wedge \rmd x^{j_q}
\end{equation}
where $a_{k,i_1 \dots i_p j_1 \dots j_q}(x)$ are components of some (respectively) $\mathbb{C}$-, $\bund$- or $\ebund$-valued covariant tensor field at $x \in \sympman$, and $y \in T_x\sympman$. We will repeatedly encounter statements that hold true for all variants of these target spaces. To avoid redundant repetitions let us introduce notation that $\dotmark$ stands for $\mathbb{C}$, $\bund$ and $\ebund$ if not otherwise restricted. 

The ``degree counting'' combined with ``iteration method'' are basic tools of Fedosov construction for controlling behavior of formal series (\ref{wsect}). Consider monomial in (\ref{wsect}) with $p$-fold $y^i$ and $k$-th power of $h$. One can prescribe degree to it by the rule
\begin{equation}
\label{degdef}
\deg (h^k a_{k,i_1 \dots i_p j_1 \dots j_q}(x)y^{i_1} \dots y^{i_p} \rmd x^{j_1} \wedge \dots \wedge \rmd x^{j_q})=2k+p
\end{equation}
For a general inhomogeneous element of Weyl bundle, the degree is defined as the lowest degree of its nonzero monomials. The iteration method can be described in the following way. Let $P_m$ denote the operator which extracts monomials of degree $m$ from given $a$
\begin{equation*}
P_m(a)(x,y,h)=\sum_{2k+p=m} h^k a_{k,i_1 \dots i_p j_1 \dots j_q}(x)y^{i_1} \dots y^{i_p} \rmd x^{j_1} \wedge \dots \wedge \rmd x^{j_q}.
\end{equation*}
We frequently consider equations of the form 
\begin{equation}
\label{fedoitermet}
a=b +K(a)
\end{equation} 
and try to solve them iteratively with respect to $a$, by putting $a^{(0)}=b$ and $a^{(n)}=b+K(a^{(n-1)})$. If $K$ is linear and raises degree (i.e.\ $\deg a < \deg K(a)$ or $K(a)=0$) then one can quickly deduce that the unique solution of (\ref{fedoitermet}) is given by the series of relations $P_m(a)=P_m(a^{(m)})$. However, in the case of nonlinear $K$ (as in (\ref{fedabelrecur})) the more careful analysis must be performed.

The useful property of Weyl bundles is the existence of global ``Poincare decompostion''. It can be verified that operators $\delta$ and $\delta^{-1}$ defined by relations 
\begin{equation}
\delta a =\rmd x^k \wedge \frac{\partial a}{\partial y^k} \quad \text{and} \quad \delta^{-1}a_{km}=\frac{1}{k+m}y^s \iota \left(\frac{\partial}{\partial x^s}\right) a_{km}
\end{equation}
for $a_{km}$ with $k$-fold $y$ and $m$-fold $\rmd x$ (and by linear extension for inhomogeneous $a$) provide for arbitrary $a \in \wbdot \otimes \Lambda$ decompostion
\begin{equation}
\label{pncrdecomp}
a=\delta \delta^{-1}a + \delta^{-1} \delta a + a_{00}
\end{equation}
where $a_{00}$ denotes homogeneous part of $a$ containing no $y^i$ and $\rmd x^j$. Both $\delta$ and $\delta^{-1}$ are nilpotent, ie{.} $\delta\delta=\delta^{-1} \delta^{-1}=0$.

Let $\circ$ denote usual fiberwise Moyal product 
\begin{equation}
\label{moyal}
a \circ b = \sum_{m=0}^{\infty}\left( -\frac{\rmi h}{2}\right)^m \frac{1}{m!} 
\frac{\partial^m a}{\partial y^{i_1} \dots \partial y^{i_m}}
\omega^{i_1 j_1} \dots \omega^{i_m j_m}
\frac{\partial^m b}{\partial y^{j_1} \dots \partial y^{j_m}}
\end{equation}
Here $\omega^{ij}$ are components of the Poisson tensor corresponding to symplectic form $\omega=\frac{1}{2}\omega_{ij} \rmd x^i \wedge \rmd x^j$. 
Notice that above formula is meaningful not only for $a,b \in \wbc \otimes \Lambda$ or $a,b \in \wbe \otimes \Lambda$. Following \cite{waldmann} we admit case of $a \in \wbe \otimes \Lambda$, $b \in \wbb \otimes \Lambda$ which would provide deformation of action of an endomorphism on a vector, and $a \in \wbe\otimes\Lambda$, $b \in \wbc\otimes\Lambda$ corresponding to deformation of scaling a vector field by a function\footnote{We will also allow case of $a \in \wbc\otimes\Lambda$, $b \in \wbe\otimes\Lambda$ but this fiberwise product does not correspond to deformation of right multiplication of a vector field by a function on the manifold.}. For all these cases the Moyal product is associative $(a \circ b) \circ c=a \circ (b \circ c)$, as long as both sides of this relation are well defined. 
The generalized deformed fiberwise product can be introduced in the following way. Let 
\begin{equation}
\label{gdef}
g=\id+\sum_{\substack{2s-k \geq 0 \\ s,k > 0}} h^s g_{(s)}^{i_1 \dots i_k} \frac{\partial^k}{\partial y^{i_1}\dots \partial y^{i_k}}
\end{equation}
be fiberwise star equivalence isomorphism  with $g_{(s)}^{i_1 \dots i_k}$ being components of some $k$-contravariant $\mathbb{C}$-valued tensors on $\sympman$. Formula (\ref{gdef}) implies that $g$ preserves degree of elements of Weyl bundle\footnote{Consider arbitrary $a \neq 0$. Let $\pi(a)$ be monomial in $a$ of lowest degree (ie. $\deg \pi(a) =\deg a$), and with maximal number of $y$'s within this degree. Then it can be easily verified that $\pi(a)=\pi(ga)$ (because of the form of derivatives in (\ref{gdef})), hence $\deg (ga) = \deg (a)$.}, ie{.} $\deg(ga) = \deg(a)$, and that $g$ is formally invertible. Indeed, writing 
\begin{equation*}
g^{-1}=\id+\sum_{\substack{2s-k \geq 0 \\ s,k > 0}} h^s \breve{g}_{(s)}^{i_1 \dots i_k} \frac{\partial^k}{\partial y^{i_1}\dots \partial y^{i_k}}
\end{equation*}
and considering relation $ g^{-1} g =\id$ one arrives at the formula
\begin{equation}
\label{ginvcoefs}
g_{(a)}^{i_1 \dots i_b} + \breve{g}_{(a)}^{i_1 \dots i_b}+ \sum_{\substack{2a-b \geq 2s-k \geq 0 \\ a > s > 0\\b>k>0}} g_{(s)}^{(i_1 \dots i_k} \breve{g}_{(a-s)}^{i_{k+1} \dots i_b)} = 0
\end{equation}
which allows recursive computation of coefficients $\breve{g}_{(a)}^{i_1 \dots i_b}$. The generalized fiberwise product is defined as 
\begin{equation}
a \gcirc b=g^{-1}(ga \circ gb)
\end{equation}
We are going to do denote graded commutator with respect to $\gcirc$ by
\begin{equation}
[a \gcomm b]=a \gcirc b -(-1)^{rs} b \gcirc a
\end{equation}
for $r$-form $a$ and $s$-form $b$. Similarly $[a \ccomm b]$ stands for commutator with respect to $\circ$.
Form of (\ref{gdef}) implies that $g$ commutes with $\delta$ (ie{.} $g \delta=\delta g$) and consequently $\delta$ is  the $+1$-derivation with respect to $\gcirc$
\begin{equation}
\delta (a \gcirc b)=\delta a \gcirc b + (-1)^k a \gcirc \delta b
\end{equation}
for $k$-form $b$. The operator $\delta$ can be represented as a commutator with respect to $\circ$ by $\delta a= \ibh[s \ccomm a]$ with $s=-\omega_{ij}y^i \rmd x^j$. It follows that it is also commutator with respect to $\gcirc$ because
\begin{equation*}
\delta a =g^{-1} \delta g a = \ibh g^{-1}[s\ccomm ga]=\ibh[\gend{s} \gcomm a]
\end{equation*}
where $\gend{s}=g^{-1} s$.

Let us notice that in the case of $\wbe$ one must deal with following subtlety related to initial noncommutativity of product of endomorphisms. For operators of the form $K=\ibh [s \gcomm \cdot \,]$ one cannot use arbitrary $s$, as this could yield negative powers of $h$. The only appropriate $s \in C^\infty (\sympman, \wbe \otimes \Lambda)$ are those for which monomials of the form $h^0 a_{0,i_1 \dots i_p j_1 \dots j_q} y^{i_1} \dots y^{i_p} \rmd x^{j_1} \wedge \dots \wedge \rmd x^{j_q}$ are defined by central endomorphisms $a_{0,i_1 \dots i_p j_1 \dots j_q}$. (This statement can be easily verified for Moyal product $\circ$, and then transported to generalized case by means of $g$). Let us call them $C$-sections. Also, let us use the term  $C$-operator for mappings which transport  $C$-sections to $C$-sections. Obviously $\delta$, $\delta^{-1}$, 
$g$ and $g^{-1}$ are $C$-operators. The following lemma is an useful tool for controlling occurrence of negative powers of $h$. 
\begin{lmm}
\label{cs_lmm}
For arbitrary $C$-section $s$ the commutator $\ibh[s \gcomm \cdot \,]$ is a $C$-operator.
\end{lmm}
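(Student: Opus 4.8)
The plan is to transport the statement to the Moyal product $\circ$ by means of $g$, so that it reduces to a short computation with (\ref{moyal}). By linearity we may take both arguments to be homogeneous forms; let the $C$-section $s$ be a $p$-form and let $a$ be a $q$-form. Writing out the graded commutator and inserting $a\gcirc b=g^{-1}(ga\circ gb)$ together with linearity of $g^{-1}$ gives
\begin{equation*}
\ibh[s\gcomm a]=g^{-1}\left(\ibh[gs\ccomm ga]\right).
\end{equation*}
Because every correction term in (\ref{gdef}) carries a strictly positive power of $h$, both $g$ and $g^{-1}$ leave the $h^0$ part of any section unchanged; as the $C$-section property is a condition on the $h^0$ part alone, this simultaneously reconfirms that $g$ and $g^{-1}$ are $C$-operators and shows that $gs$ and $ga$ are again $C$-sections. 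Hence it suffices to prove the assertion for $\circ$: for $C$-sections $\sigma=gs$ and $\alpha=ga$ the expression $\ibh[\sigma\ccomm\alpha]$ is well defined (carries no negative power of $h$) and is again a $C$-section; applying the $C$-operator $g^{-1}$ then concludes the argument.

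For the Moyal step I would expand (\ref{moyal}) and keep track of the explicit powers of $h$. Denote by $\sigma_0,\alpha_0$ the $h^0$ parts of $\sigma,\alpha$ --- these have central endomorphism coefficients --- and by $\sigma_1,\alpha_1$ their $h^1$ parts. The $h^0$ part of $[\sigma\ccomm\alpha]$ equals $\sigma_0\alpha_0-(-1)^{pq}\alpha_0\sigma_0$; since the coefficients of $\sigma_0$ are central they commute with those of $\alpha_0$, the $y$-variables multiply symmetrically, and the form factors contribute exactly the sign $(-1)^{pq}$, so this term vanishes and no negative power of $h$ is produced. The $h^0$ part of $\ibh[\sigma\ccomm\alpha]$ is then $\rmi$ times the $h^1$ coefficient of $[\sigma\ccomm\alpha]$, which receives contributions from the $m=0$ term of (\ref{moyal}), namely $\sigma_0\alpha_1+\sigma_1\alpha_0$ together with their graded reverses, and from the $m=1$ (Poisson) term $-\tfrac{\rmi}{2}\partial_i\sigma_0\,\omega^{ij}\partial_j\alpha_0$. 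Using centrality of $\sigma_0$ and $\alpha_0$ together with graded-commutativity of the form part, the $m=0$ contributions cancel in pairs (for instance $\sigma_0\alpha_1$ cancels $(-1)^{pq}\alpha_1\sigma_0$, which by centrality equals $\sigma_0\alpha_1$), leaving only
\begin{equation*}
-\frac{\rmi}{2}\left(\partial_i\sigma_0\,\omega^{ij}\partial_j\alpha_0-(-1)^{pq}\partial_i\alpha_0\,\omega^{ij}\partial_j\sigma_0\right),
\end{equation*}
whose endomorphism coefficients are products of central ones and hence central, because the $y$-derivatives do not touch the endomorphism factors. Thus $\ibh[\sigma\ccomm\alpha]$ is a $C$-section, and transporting back by $g^{-1}$ proves the lemma.

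The step I expect to be the main obstacle is the bookkeeping in the Moyal computation: one must track simultaneously the antisymmetric form degrees (the $(-1)^{pq}$ signs), the symmetric $y$-variables, and the genuinely noncommutative endomorphism factors, while correctly matching the explicit $h$-powers supplied by the prefactor $\ibh$ against those generated by the Moyal expansion. The conceptual observation that makes the calculation close is that every potentially non-central contribution occurs paired with its graded reverse and cancels by centrality of the $h^0$ parts, so that only manifestly central, Poisson-type terms survive.
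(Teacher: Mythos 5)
Your proposal is correct and follows exactly the route the paper takes: the paper's proof is the one-line remark that the statement is straightforward for $\circ$ and transports to $\gcirc$ via the $C$-operator $g$, and your argument simply fills in both of those steps (the cancellation of the $h^{0}$ part of the commutator by centrality, the centrality of the surviving $h^{1}$/Poisson-type terms, and the conjugation $\ibh[s\gcomm a]=g^{-1}\left(\ibh[gs\ccomm ga]\right)$ with $g$, $g^{-1}$ preserving $h^{0}$ parts).
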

The proof is straightforward for $\circ$, and using $C$-operator $g$ one can immediately extend it to generalized product $\gcirc$. 
 
We should also mention that the only central elements of $\wbe \otimes \Lambda$ and $\wbc \otimes \Lambda$ are these belonging to $\Lambda$, ie{.} scalar forms on base manifold $\sympman$. This fact is well known for $\circ$ and can be trivially transfered to the case of $\gcirc$.
 
Connections $\connsymp$ and $\connbund$ give rise to the connections $\partial^\wbdot$ in all variants of Weyl bundle. Let $\tensor{\Gamma}{^i_{jk}}$ be local connection coefficients of $\connsymp$, and let $\gambund$ be local connection $1$-form of $\connbund$, ie{.} locally $\connbund=\rmd+\gambund$ and $\connend=\rmd+[\gambund,\cdot\,]$. Then locally, in some Darboux coordinates, one can write
\begin{subequations}
\label{connlocalform}
\begin{align}
\label{connlocalformwbc}
\partial^\wbc& =\rmd + \ibh [1/2 \Gamma_{ijk}y^i y^j \rmd x^k\ccomm \cdot\,] \\
\label{connlocalformwbb}
\partial^\wbb& =\rmd + \ibh [1/2 \Gamma_{ijk}y^i y^j \rmd x^k\ccomm \cdot\,] + \gambund \\
\label{connlocalformwbe}
\partial^\wbe& =\rmd + \ibh[1/2 \Gamma_{ijk}y^i y^j \rmd x^k-\rmi h \gambund \ccomm \cdot\,]
\end{align}
\end{subequations}
using local connection coefficients $\Gamma_{ijk}=\omega_{is}\tensor{\Gamma}{^s_{jk}}$ of $\connsymp$.
Notice that if we consider $\wbc$ as a subbundle of $\wbe$, then (\ref{connlocalformwbe}) gives same results as (\ref{connlocalformwbc}). Connections $\partial^\wbdot$ are $+1$-derivations for Moyal product, and since we admitted various types of factors for $\circ$, this statement includes all corresponding ``compatibilities'' (eg{.} $\partial^\wbb (a \circ b)=\partial^\wbe a \circ b + (-1)^k a \circ \partial^\wbb b$ for $a \in \wbe \otimes \Lambda^k$ and $b \in \wbb \otimes \Lambda$). We need analogous $+1$-derivations for $\gcirc$, thus let us introduce the generalized connections
\begin{equation}
\label{gconns}
\gconn^\wbdot= g^{-1} \partial^\wbdot g
\end{equation}
for which the relation $\gconn^\wbdot (a \gcirc b)=\gconn^\wbdot a \gcirc b +(-1)^k a \gcirc \gconn^\wbdot b$ holds. In general $\gconn^\wbdot$ cannot be written in form analogous to (\ref{connlocalform}) due to derivatives of fields $g_{(s)}^{i_1 \dots i_k}$. However, for $\wbc$ and $\wbe$ from $\big(\partial^\wbdot\big)^2 =\ibh[\brdot \ccomm \cdot\,]$ with $\brc=\frac{1}{4}\omega_{im}\tensor{\curvsymp}{^m_{jkl}}y^i y^j \rmd x^k \wedge \rmd x^l$ and $\bre=\brc-\frac{ih}{2} \curvbund_{kl} \rmd x^k \wedge \rmd x^l$ we have
\begin{equation}
\label{gconnsqr}
\big(\gconn^\wbdot\big)^2 = g^{-1} \partial^\wbdot \partial^\wbdot g = \ibh g^{-1} [\brdot \ccomm g(\cdot)]=\ibh[\gbrdot \gcomm \cdot\,]
\end{equation}
for $\gbrdot=g^{-1}\brdot$. Notice that by lemma \ref{cs_lmm} both $\partial^\wbe$ and $\gconn^\wbe$ are $C$-operators, while $\bre$ and $\gbre$ are $C$-sections.

\subsection{First Fedosov theorem -- Abelian connections}
One can analyze more general connections in $\wbc$ and $\wbe$ of the form
\begin{equation}
\label{wbconntype}
\nabla^\wbdot=\gconn^\wbdot + \ibh[ \gamdot \gcomm \cdot \,]
\end{equation}
with $\gamdot \in C^\infty (\sympman, \wbdot \otimes \Lambda^1)$. It follows that $\big(\nabla^\wbdot\big)^2 = \ibh [ \Omega^\wbdot \gcomm \cdot \,]$ for curvature $2$-form defined as
\begin{equation}
\label{weylcurvdef}
\Omega^\wbdot = \gbrdot + \gconn^\wbdot \gamdot + \ibh \gamdot \gcirc \gamdot
\end{equation}
Abelian (flat) connections in Weyl bundles will be denoted by $D^\wbdot$. Flatness conditions $\big(D^\wbc\big)^2=0$,  $\big(D^\wbe\big)^2=0$ imply that for Abelian connections $D^\wbc$ and $D^\wbe$ their curvatures must be  scalar $2$-forms. The first essential element of Fedosov formalism is explicit (although recursive) construction of Abelian connections. In our context this result can be summarized in the following theorem.

\begin{thrm}
\label{fedo_abel}
Let $\gconn^\wbdot$ be arbitrary connections of type (\ref{gconns}), let $\kappa \in C^\infty (\sympman, \Lambda^2)[[h]]$ be formal power series of closed $2$-forms (ie{.} $\rmd \kappa=0$), and let 
\begin{equation*}
\muc \in C^\infty (\sympman, \wbc), \quad \quad \mue=\muc -\rmi h \Delta\mue \in C^\infty (\sympman, \wbe)
\end{equation*}
 be arbitrary $C$-sections such that $\deg \mudot \geq 3$ and $\mudot|_{y=0}=0$. There exist unique Abelian connections
\begin{subequations}
\label{abelecdef}
\begin{align}
D^\wbc& =-\delta + \gconn^\wbc + \ibh[\rsc \gcomm \cdot\,]\\
D^\wbe& =-\delta + \gconn^\wbe + \ibh[\rse \gcomm \cdot\,]
\end{align}
\end{subequations}
satisfying
\begin{itemize}
\item $\Omega^\wbc = \Omega^\wbe=-\omega + h\kappa$
\item $\rse$ is $C$-section,
\item $\delta^{-1}\rsdot=\mudot$, 
\item $\deg \rsdot \geq 2$.
\end{itemize}
The $1$-forms $\rsc \in C^\infty (\sympman, \wbc\otimes\Lambda^1)$, $\rse \in C^\infty (\sympman, \wbe\otimes\Lambda^1)$ can be calculated as the unique solutions of equations
\begin{equation}
\label{fedabelrecur}
\rsdot=r_0^\dotmark + \delta^{-1}(\gconn^\wbdot \rsdot +\ibh \rsdot \gcirc \rsdot)
\end{equation}
with $r_0^\dotmark=\delta^{-1}(\gbrdot-h\kappa)+\delta \mudot$. Moreover $D^\wbc$ and $D^\wbe$ define Abelian connection
\begin{equation}
D^\wbb=-\delta + \gconn^\wbb + \ibh[\rsc \gcomm \cdot\,] + \ibh\rsb
\end{equation}
with $C^\infty (\sympman, \wbe\otimes\Lambda^1) \ni \rsb=\rse-\rsc$ and $\ibh\rsb$ not containing negative powers of $h$, such that following compatibility conditions hold true
\begin{subequations}
\label{fedocompatib}
\begin{align}
D^\wbb(A \gcirc X)& = (D^\wbe A)\gcirc X + (-1)^k A \gcirc D^\wbb X\\
D^\wbb(X \gcirc a)& = (D^\wbb X)\gcirc a + (-1)^l X \gcirc D^\wbc a
\end{align}
\end{subequations}
for $A \in C^\infty (\sympman, \wbe\otimes\Lambda^k)$, $X \in C^\infty (\sympman, \wbb\otimes\Lambda^l)$ and $a \in C^\infty (\sympman, \wbc\otimes\Lambda)$.
\end{thrm}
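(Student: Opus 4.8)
The plan is to transcribe Fedosov's original argument into the present setting, which is possible because $\delta$, $\delta^{-1}$, the Poincar\'e decomposition (\ref{pncrdecomp}) and the degree-iteration method are insensitive to the choice of fibrewise product, while $\gconn^\wbdot$ and $\delta$ are both $+1$-derivations of $\gcirc$ with $\delta=\ibh[\gend{s}\gcomm\cdot\,]$. First I would rewrite each connection (\ref{abelecdef}) as $\gconn^\wbdot+\ibh[\gamdot\gcomm\cdot\,]$ with $\gamdot=-\gend{s}+\rsdot$ and feed this into the curvature formula (\ref{weylcurvdef}). Using $\gconn^\wbdot\gend{s}=g^{-1}\partial^\wbdot s=0$, the identity $\ibh\gend{s}\gcirc\gend{s}=\tfrac12\delta\gend{s}=-\omega$, and $-\ibh[\gend{s}\gcomm\rsdot]=-\delta\rsdot$, this collapses to $\Omega^\wbdot=-\omega+\gbrdot+\gconn^\wbdot\rsdot-\delta\rsdot+\ibh\rsdot\gcirc\rsdot$. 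Imposing the prescribed value $\Omega^\wbdot=-\omega+h\kappa$ turns flatness into $\delta\rsdot=\gbrdot-h\kappa+\gconn^\wbdot\rsdot+\ibh\rsdot\gcirc\rsdot$; applying $\delta^{-1}$, invoking (\ref{pncrdecomp}) with $(\rsdot)_{00}=0$ (as $\rsdot$ is a one-form) and the normalization $\delta^{-1}\rsdot=\mudot$, I recover precisely the recursion (\ref{fedabelrecur}) with $r_0^\dotmark=\delta^{-1}(\gbrdot-h\kappa)+\delta\mudot$.

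Next I would settle existence and uniqueness of $\rsdot$. The operator $K(\rsdot)=\delta^{-1}(\gconn^\wbdot\rsdot+\ibh\rsdot\gcirc\rsdot)$ raises degree: $\delta^{-1}$ raises it by one, $\gconn^\wbdot$ preserves it, and because $\deg\rsdot\geq2$ the term $\ibh\rsdot\gcirc\rsdot$ cannot lower it below $\deg\rsdot$; since moreover $\deg r_0^\dotmark\geq2$, the (nonlinear) iteration method fixes $P_m(\rsdot)$ uniquely for each $m$, yielding a unique solution with $\deg\rsdot\geq2$. Applying $\delta^{-1}$ to (\ref{fedabelrecur}) and using $\delta^{-1}\delta^{-1}=0$, $\delta^{-1}\mudot=0$ (a $0$-form carries no $\rmd x$) and $\mudot|_{y=0}=0$ then returns the asserted $\delta^{-1}\rsdot=\mudot$. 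Conversely, any $D^\wbdot$ of the stated form meeting the four conditions forces this same equation on its $\rsdot$, so the connection is unique.

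The delicate step is verifying that the connection so produced is genuinely Abelian, i.e.\ that $\Xi^\wbdot:=\Omega^\wbdot+\omega-h\kappa$ vanishes rather than merely satisfying $\delta^{-1}\Xi^\wbdot=0$. From the recursion one gets $\delta\rsdot=\delta\delta^{-1}(\gbrdot-h\kappa+\gconn^\wbdot\rsdot+\ibh\rsdot\gcirc\rsdot)$, whence $\Xi^\wbdot=\delta^{-1}\delta(\ldots)$, so $\delta^{-1}\Xi^\wbdot=0$ and $(\Xi^\wbdot)_{00}=0$. The key extra input is the genuine (not merely ``central'') Bianchi identity $D^\wbdot\Omega^\wbdot=0$; I would prove it by a direct expansion of $\gconn^\wbdot\Omega^\wbdot+\ibh[\gamdot\gcomm\Omega^\wbdot]$ in which every term cancels, using $(\gconn^\wbdot)^2=\ibh[\gbrdot\gcomm\cdot\,]$ from (\ref{gconnsqr}) and the differential Bianchi identity $\gconn^\wbdot\gbrdot=g^{-1}\partial^\wbdot\brdot=0$. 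Since $-\omega+h\kappa$ is central and closed, $D^\wbdot(-\omega+h\kappa)=0$, hence $D^\wbdot\Xi^\wbdot=0$; writing this as $\delta\Xi^\wbdot=\gconn^\wbdot\Xi^\wbdot+\ibh[\rsdot\gcomm\Xi^\wbdot]$, applying $\delta^{-1}$ and using $\delta^{-1}\Xi^\wbdot=0$, $(\Xi^\wbdot)_{00}=0$ gives $\Xi^\wbdot=\delta^{-1}(\gconn^\wbdot\Xi^\wbdot+\ibh[\rsdot\gcomm\Xi^\wbdot])$. As this right-hand operator strictly raises degree, iteration uniqueness forces $\Xi^\wbdot=0$. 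I expect this centrality/flatness verification to be the main obstacle, since it hinges on the genuine Bianchi identity and a degree induction rather than on the fixed-point construction alone.

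It remains to treat the endomorphism-specific and bimodule claims. That $\rse$ is a $C$-section follows by induction along the iteration: $r_0^\ebund$ is a $C$-section (as $\gbre$, $h\kappa$ and $\mue$ are, and $\delta^{\pm1}$ are $C$-operators), while $\gconn^\wbe$ and $\ibh[\,\cdot\gcomm\cdot\,]$ preserve the $C$-property by Lemma \ref{cs_lmm}, so every iterate, and hence $\rse$, is a $C$-section. For $D^\wbb$ I would read its $r$-term on an $l$-form $X$ as $\ibh[\rsc\gcomm X]+\ibh\rsb\gcirc X=\ibh\rse\gcirc X-(-1)^l\ibh X\gcirc\rsc$, i.e.\ left action of the $\wbe$-datum $\rse$ and right action of the $\wbc$-datum $\rsc$; the compatibility relations (\ref{fedocompatib}) then reduce to a direct Koszul-sign computation in which, after invoking associativity of $\gcirc$ and the derivation properties of $-\delta$ and $\gconn^\wbdot$ across the three factor types, the two inner $\rse$-terms cancel. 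Finally, regularity of $\ibh\rsb$ (absence of negative powers of $h$) amounts to $(\rsb)_{h^0}=0$; this I would obtain by comparing the $\mathbb{C}$- and $\ebund$-recursions order by order in $h$, noting that $\gbre-\gbrc$ and $\mue-\muc$ are $O(h)$, that $\gconn^\wbe$ restricts to $\gconn^\wbc$ on $\wbc\subset\wbe$, and that $\rse$ and $\rsc$ are both central at order $h^0$, so their $h^0$-parts coincide.
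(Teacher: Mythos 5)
Your proposal takes essentially the same route as the paper's own proof outline: both transcribe Fedosov's original argument (theorems 5.2.2 and 5.3.3 of his book, plus Waldmann's bimodule extension) into the generalized setting. The steps you do cover are sound, and in places more explicit than the paper: the identities $\gconn^\wbdot\gend{s}=0$ and $\ibh\,\gend{s}\gcirc\gend{s}=-\omega$ leading to (\ref{omeq1})--(\ref{omeq2}), the Bianchi-identity plus degree-iteration argument showing $\Xi^\wbdot=0$ (which the paper only gestures at with ``one can check''), and the propagation of the $C$-property along the iteration are all correct. Your order-by-order treatment of $\ibh\rsb$ also works, though it is compressed: centrality of the $h^0$-parts alone does not force them to coincide; one must run the induction through the two recursions, and the paper's device of a standalone recursion for $\ibh\rsb$, with iteration operator $\delta^{-1}(\gconn^\ebund + (\cdot)^2 + \ibh[\rsc \gcomm \cdot\,])$ which is harmless because $\rsc$ is a $C$-section, yields regularity more directly.

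The genuine gap is that you never prove that $D^\wbb$ is \emph{Abelian}, i.e. $(D^\wbb)^2=0$, which is an explicit assertion of the theorem; correspondingly, you never invoke the hypothesis $\Omega^\wbc=\Omega^\wbe$, which the paper flags as the crucial ingredient precisely at this point. The compatibility relations (\ref{fedocompatib}) do not imply flatness: combined with $(D^\wbe)^2=0$ they only show $(D^\wbb)^2(A\gcirc X)=A\gcirc (D^\wbb)^2X$, i.e. that $(D^\wbb)^2$ is left $\wbe$-linear, not that it vanishes. The missing computation is short in your own bimodule notation: since $\delta$ anticommutes with $\gconn^\wbb$ and $\big(\gconn^\wbb\big)^2X=\ibh\,\gbre\gcirc X-\ibh\,X\gcirc\gbrc$ (the $\wbb$-analogue of (\ref{gconnsqr})), the same cross-term cancellations as in your compatibility check give, for $X\in C^\infty(\sympman,\wbb\otimes\Lambda^l)$,
\begin{equation*}
\big(D^\wbb\big)^2X=\ibh\big(\Omega^\wbe+\omega\big)\gcirc X-\ibh\,X\gcirc\big(\Omega^\wbc+\omega\big)
\end{equation*}
by (\ref{omeq1}); this vanishes because both curvatures equal the same \emph{central scalar} form $-\omega+h\kappa$, so both terms reduce to $\rmi\kappa\, X$ and cancel. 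You should add this verification (or an equivalent one) to complete the ``moreover'' part of the statement.
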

\begin{proof}[Outline of the proof]
The theorem combines results of Fedosov (theorems 5{.}2{.}2, 5{.}3{.}3 of \cite{fedosovbook}) and Waldmann (theorem 3 of \cite{waldmann}) with present generalized setting prototyped in \cite{tosiekacta3}. There is nothing substantially new in the proof, thus let us restrict to its key ingredients and few points for which some care due to our generalization should be taken.

Notice first that connections (\ref{abelecdef}) can be written as
\begin{equation}
D^\wbdot= \gconn^\wbdot + \ibh[-\gend{s}+\rsdot \gcomm \cdot\,]\\
\end{equation}
and thus are of type (\ref{wbconntype}). Their curvatures (\ref{weylcurvdef}) are given by
\begin{equation}
\label{omeq1}
\Omega^\wbdot=   \gbrdot - \omega - \delta \rsdot + \gconn^\wbdot \rsdot +\ibh \rsdot \gcirc \rsdot
\end{equation}
and the requirement $\Omega^\wbdot=-\omega+h\kappa$ yields
\begin{equation}
\label{omeq2}
\delta \rsdot=   \gbrdot - h\kappa + \gconn^\wbdot \rsdot +\ibh \rsdot \gcirc \rsdot
\end{equation}
This formula together with decomposition (\ref{pncrdecomp}) and condition $\delta^{-1}\rsdot=\mudot$ gives relation (\ref{fedabelrecur}). In order to show that unique recursive solution of (\ref{fedabelrecur}) exists, one could use lemma 5{.}2{.}3 of \cite{fedosovbook}. Its proof relies on ``degree counting'' and stays valid in generalized case. This is because both $g$ and $g^{-1}$ preserve degree.  In turn $\gconn$ does not lower degree and $\deg(a \gcirc b) = \deg(a) + \deg(b)$ if $a,b \neq 0$. Consequently, whole reasoning of the proof remains intact. To ensure that $\rse$ is a $C$-section one may observe that it is calculated by recursive application of $\id + \delta^{-1}\left( \gconn  + \ibh (\cdot)^{2}  \right)$ which for $1$-forms is an $C$-operator. The initial point for this iterative procedure is given by the $C$-section $r_0^\ebund = \delta^{-1}(\gbre-h\kappa)+\delta \mue$. One can check (compare with \cite{fedosovbook}) that $\rsdot$ computed from (\ref{fedabelrecur}) indeed yields curvature $\Omega^\wbdot=-\omega + h \kappa$. Conditions $\rmd \kappa=0$ and $\mudot|_{y=0}=0$ must be used for this purpose. The property $\deg \rsdot \geq 2$ follows from $\deg r_0^\dotmark \geq 2$ and the fact that $\delta^{-1}\left( \gconn  + \ibh (\cdot)^{2}  \right)$ raises degree.

The property $(D^\wbb)^2=0$ can be verified by direct calculation (condition $\Omega^\wbc=\Omega^\wbe$ appears to be important here), as well as ``compatibilities'' (\ref{fedocompatib}). 

The difference between recursions (\ref{fedabelrecur}) for $\rse$ and for $\rsc$ yields following relation
\begin{equation*}
\ibh \rsb=\ibh\left(r_0^\ebund-r_0^\mathbb{C}\right)+\delta^{-1}\left(\gconn^\ebund\left(\ibh \rsb\right)+\left(\ibh \rsb\right)\gcirc\left(\ibh \rsb\right)+\ibh[\rsc\gcomm\ibh \rsb]\right)
\end{equation*}
which is iterative formula for $\ibh \rsb$. It starts with 
\begin{equation*}
\ibh r_0^\bund = \ibh \left( r_0^\ebund-r_0^\mathbb{C}\right)=\frac{1}{2} \delta^{-1}\curvbund_{ij} \rmd x^i \wedge \rmd x^j + \delta \Delta \mue \in \wbe \otimes \Lambda^1
\end{equation*}
and proceeds by application of operator $ \delta^{-1}(\gconn^\ebund + (\cdot)^2 + \ibh[\rsc \gcomm \cdot\,])$ which raises degree and does not produce negative powers of $h$ because $\rsc$ is a $C$-section. We infer that $\ibh \rsb \in \wbe \otimes \Lambda^1$. 
\end{proof}
The immediate consequence of above theorem and lemma \ref{cs_lmm} is that $D^\wbe$ is a $C$-operator.
\subsection{Second Fedosov theorem -- star products}
Abelian connections allow to construct nontrivial liftings of functions, sections of $\bund$ and sections of $\ebund$ to sections of corresponding Weyl bundles. The key point is that these liftings form subalgebra of algebra of all sections. In this way desired star products are constructed -- we take two objects we want to multiply, lift them to Weyl bundles, multiply liftings using $\gcirc$ and project the result back to the appropriate space of functions or sections of suitable bundle.  

More precisely, let us call $a\in C^\infty (\sympman, \wbdot)$ flat if $D^\wbdot a=0$. Flat sections form subalgebra of $C^\infty (\sympman, \wbdot)$ (this is obvious consequence of Leibniz rule and $(D^\wbdot)^2=0$) denoted by $\wddot$. Let $\qmdot(a)$ be the solution of equation
\begin{equation}
b=a+\delta^{-1} (D^\wbdot + \delta)b
\end{equation}
with respect to $b$. The iteration method ensures that $\qmdot : C^\infty (\sympman, \wbdot) \to C^\infty (\sympman, \wbdot)$ is well-defined linear bijection. It follows that the inverse mapping is given by $\qminvdot a = a-\delta^{-1} (D^\wbdot + \delta)a$. The following theorem holds.
\begin{thrm}
$\qmdot$ bijectively maps $C^\infty (\sympman, \dotmark)[[h]]$ to $\wddot$.
\end{thrm}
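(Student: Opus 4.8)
The plan is to exhibit the symbol projection $\sigma(b):=b|_{y=0}$ (the restriction of a section to its $y$-independent part) as the two-sided inverse of $\qmdot$ on the relevant subspaces. Identifying $C^\infty(\sympman,\dotmark)[[h]]$ with the subspace of $y$-independent sections of $\wbdot$, I would establish three facts: that $\qmdot$ sends such an $a$ into the flat sections (i.e.\ $D^{\wbdot}\qmdot(a)=0$, so $\qmdot(a)\in\wddot$); that $\sigma\circ\qmdot=\id$ on this subspace; and that every flat section $\beta$ equals $\qmdot(\sigma(\beta))$. Since $\qmdot$ is already a well-defined linear bijection of $C^\infty(\sympman,\wbdot)$ onto itself (with inverse $\qminvdot$), the first and third facts pin its image down to exactly $\wddot$, while the second gives injectivity on the subspace and identifies $\sigma$ as the inverse.

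For flatness, set $b=\qmdot(a)$ with $a=a|_{y=0}$, so that $\delta a=0$. Because $D^{\wbdot}$ is Abelian by Theorem~\ref{fedo_abel}, we have $\big(D^{\wbdot}\big)^2=0$, so $c:=D^{\wbdot}b$ obeys $D^{\wbdot}c=0$. Applying $\delta$ to the defining relation $b=a+\delta^{-1}(D^{\wbdot}+\delta)b$ and using $\delta a=0$, $\delta\delta=0$ together with the Poincar\'e decomposition (\ref{pncrdecomp}) (a one-form has vanishing $00$-part), I would first obtain $c=\delta^{-1}\delta c$ and then, invoking $D^{\wbdot}c=0$, the homogeneous equation $c=\delta^{-1}(D^{\wbdot}+\delta)c$. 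As $\delta^{-1}(D^{\wbdot}+\delta)$ raises degree, the iteration method forces $c=0$; hence $b$ is flat and $\qmdot(a)\in\wddot$.

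Injectivity is then immediate: every term of $\delta^{-1}(D^{\wbdot}+\delta)b$ carries a factor $y^s$, so vanishes at $y=0$, whence $\sigma(\qmdot(a))=\sigma(a)=a$. For surjectivity, take any $\beta\in\wddot$ and put $a=\sigma(\beta)$. Flatness gives $(D^{\wbdot}+\delta)\beta=\delta\beta$, while the Poincar\'e decomposition of the section $\beta$ reads $\beta=\delta^{-1}\delta\beta+\sigma(\beta)$ (here $\delta^{-1}\beta=0$ because $\beta$ is a $0$-form). Combining these yields $\beta=a+\delta^{-1}(D^{\wbdot}+\delta)\beta$, so $\beta$ satisfies the equation defining $\qmdot(a)$, and by uniqueness $\beta=\qmdot(a)$. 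Thus the image of $\qmdot$ contains $\wddot$, and together with the flatness step this gives the claimed bijection, with $\sigma$ as inverse.

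The step needing the most care, and the only genuinely new point relative to the classical situation, is the $\wbe$ case: a generic $a\in C^\infty(\sympman,\ebund)[[h]]$ is \emph{not} a $C$-section, so one must verify that $\qmdot(a)$ contains no negative powers of $h$. This is secured because $\rse$ is a $C$-section, so its $h^0$-part is central and the potentially dangerous order-$h^{-1}$ contribution of $\ibh[\rse\gcomm\cdot\,]$ cancels; equivalently $D^{\wbe}$ is a $C$-operator, as remarked after Theorem~\ref{fedo_abel}. The degree-counting behind all the iterations survives verbatim since $g$ and $g^{-1}$ preserve degree, $\gconn^{\wbdot}$ does not lower it, and $\gcirc$ is degree-additive; so apart from this bookkeeping of $h$-powers the argument reduces to the classical Fedosov--Waldmann one.
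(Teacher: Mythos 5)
Your proposal is correct and is essentially the same argument the paper relies on: the paper omits the proof, deferring to Fedosov's Theorem 5.2.4 (with Waldmann's extension), and your reconstruction — flatness of $\qmdot(a)$ via the Poincar\'e decomposition and a degree-raising fixed-point equation forcing $D^\wbdot\qmdot(a)=0$, the projection $b\mapsto b|_{y=0}$ as two-sided inverse, and surjectivity from $\beta=\delta^{-1}(D^\wbdot+\delta)\beta+\beta|_{y=0}$ for flat $\beta$ — is precisely that classical argument transplanted to the generalized setting. Your closing remark on the $\wbe$ case (no negative powers of $h$ because $\rse$ is a $C$-section, so $D^\wbe$ is well defined on arbitrary sections) matches the paper's own observation following Theorem~\ref{fedo_abel}, so there are no discrepancies to report.
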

This is just theorem 5{.}2{.}4 of \cite{fedosovbook}, combined with extensions of \cite{waldmann} and phrased in our generalized context. We omit the proof, as it does not require any changes comparing to original formulation.

Now let us define all variants of generalized Fedosov product. The second Fedosov theorem, properties of $\gcirc$ and $D^\wbdot$ (notice importance of compatibility conditions (\ref{fedocompatib}) for (\ref{spec_star_1}) and (\ref{spec_star_2})) yield that
\begin{subequations}
\label{starproducts}
\begin{align}
f*g &=\qminvc(\qmc f \gcirc \qmc g) && \text{for } f,g \in C^\infty (\sympman, \mathbb{C})[[h]]\\
A*B &=\qminve(\qme A \gcirc \qme B) && \text{for } A,B \in C^\infty (\sympman, \ebund)[[h]]\\
\label{spec_star_1}
A*X &=\qminvb(\qme A \gcirc \qmb X) && \text{for } A \in C^\infty (\sympman, \ebund)[[h]]\text{, }X \in C^\infty (\sympman, \bund)[[h]]\\
\label{spec_star_2}
X*f &=\qminvb(\qmb X \gcirc \qmc f) && \text{for } X \in C^\infty (\sympman, \bund)[[h]]\text{, } f \in C^\infty (\sympman, \mathbb{C})[[h]]
\end{align}
\end{subequations}
are associative (in all meaningful ways) star products. 

Writing decomposition (\ref{pncrdecomp}) for $(D^\wbdot+\delta)a$ and using $(D^\wbdot)^2=\delta^2=0$ one can derive identity
\begin{equation*}
\qminvdot D^\wbdot + \delta \qminvdot=0
\end{equation*}
It follows that corollary 5{.}2{.6} of \cite{fedosovbook} holds true in our context.
\begin{thrm}
\label{triv_deqlemma}
For given $b \in C^\infty (\sympman, \wbdot \otimes \Lambda^p)$, $p>0$  equation $D^\wbdot a=b$ has a solution if and only if $D^\wbdot b=0$. The solution may be chosen in the form $a=-\qmdot \delta^{-1}b$.
\end{thrm}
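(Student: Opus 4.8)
The plan is to split the biconditional into its two implications, of which the "only if" direction is immediate and the "if" direction reduces to a direct verification of the proposed formula by means of the operator identity $\qminvdot D^\wbdot + \delta \qminvdot = 0$ recorded just above the statement. For necessity I would simply observe that if $D^\wbdot a = b$ for some $a \in C^\infty(\sympman, \wbdot \otimes \Lambda^{p-1})$, then applying $D^\wbdot$ once more and invoking flatness of the Abelian connection gives $D^\wbdot b = (D^\wbdot)^2 a = 0$; nothing beyond $(D^\wbdot)^2 = 0$ is needed here.

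For sufficiency, assume $D^\wbdot b = 0$ with $p > 0$ and set $a = -\qmdot \delta^{-1} b$. Since $\qmdot$ is a bijection with inverse $\qminvdot$, it suffices to check $\qminvdot D^\wbdot a = \qminvdot b$ and then invoke injectivity of $\qminvdot$. I would compute the two sides separately. On the left, because $\qminvdot a = -\delta^{-1} b$, the identity $\qminvdot D^\wbdot = -\delta \qminvdot$ yields $\qminvdot D^\wbdot a = -\delta \qminvdot a = \delta \delta^{-1} b$. On the right, expanding $\qminvdot = \id - \delta^{-1}(D^\wbdot + \delta)$ and using $D^\wbdot b = 0$ collapses $\qminvdot b$ to $b - \delta^{-1}\delta b$, which by the Poincar\'e decomposition (\ref{pncrdecomp}) equals $\delta\delta^{-1} b + b_{00}$. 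The hypothesis $p > 0$ forces $b_{00} = 0$, so the right side is also $\delta\delta^{-1} b$. The two sides agree, giving $D^\wbdot a = b$.

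The step demanding the most care is the form-degree bookkeeping in the sufficiency argument: one must track that $\delta^{-1} b$ is a $(p-1)$-form so that $D^\wbdot a$ is again a $p$-form, and one must invoke $p > 0$ at precisely the point where $b_{00}$ is discarded, since the statement genuinely fails at $p = 0$. For the endomorphism-valued case $\wbdot = \ebund$ I would additionally remark that no negative powers of $h$ are introduced: $\qmdot$ and $\delta^{-1}$ are $C$-operators, hence so is the composition producing $a = -\qmdot\delta^{-1} b$, and therefore the proposed solution is a legitimate element of $C^\infty(\sympman, \wbe \otimes \Lambda^{p-1})$.
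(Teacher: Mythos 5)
Your proof is correct and takes essentially the same route as the paper: the paper's entire argument consists of deriving the identity $\qminvdot D^\wbdot + \delta \qminvdot=0$ from the decomposition (\ref{pncrdecomp}) and $(D^\wbdot)^2=\delta^2=0$, and then citing Fedosov's corollary 5.2.6, whose verification is precisely the computation you spell out (necessity from flatness, sufficiency from that identity together with the Poincar\'e decomposition and the vanishing of $b_{00}$ for $p>0$). Nothing to correct.
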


\section{Isomorphisms of generalized Fedosov algebras}
Now it is time to analyze isomorphisms of generalized Fedosov algebras. Again, the original presentation of \cite{fedosovbook} is followed, supplemented with some necessary additions and modifications. The material is presented in rather detailed way. This is because we want to provide solid basis for further statements on Seiberg-Witten map an the trace functional.
Here, bundles $\wbe$ and $\wbc$ are dealt with, thus we restrict $\dotmark$ to denote $\mathbb{C}$ or $\ebund$ within this section. 

Let $g_t$ be homotopy of isomorphisms of type (\ref{gdef}) parametrized by $t \in [0,1]$. Corresponding homotopy of deformed fiberwise products will be denoted by $\gcirc_t$.  We would like to study evolution with respect to parameter $t$, and for this purpose let us introduce
\begin{equation}
\gtder=g^{-1}_t \frac{d}{dt} g_t= \frac{d}{dt}+g^{-1}_t\frac{d g_t}{dt}
\end{equation}
It can be immediately observed that $\gtder$ is $0$-derivation with respect to $\gcirc_t$
\begin{equation}
\gtder(a \gcirc_t b)=\gtder a \gcirc_t b + a \gcirc_t \gtder b
\end{equation}
Let us also introduce homotopies of connections $\connsymp_t$, $\connbund_t$ and  corresponding homotopies of connections in Weyl bundles, ie. $\partial^\wbdot_t$ and $\gconn^\wbdot_t=g^{-1}_t \partial^\wbdot_t g_t$.  
Connections (\ref{connlocalformwbc}) and (\ref{connlocalformwbe}) can be rewritten in the uniform manner as $\partial_t^\wbdot=\rmd+\ibh[\Gamma_t^\wbdot \ccomm \cdot\,]$ with $\Gamma_t^\wbc=1/2 \Gamma_{ijk}(t)y^i y^j \rmd x^k$ and $\Gamma_t^\wbe=1/2 \Gamma_{ijk}(t)y^i y^j \rmd x^k-\rmi h \gambund(t)$. Denoting $\gend{\Gamma}_t^\wbdot=g_t^{-1} \Gamma_t^\wbdot$ one may check that
\begin{equation}
\label{gdergconn}
\gtder \gconn^\wbdot_t a = \gconn^\wbdot_t \gtder a +\ibh [\gtder \gend{\Gamma}^\wbdot_t \gcommt{t} a]
\end{equation}
Notice that unlike connection coefficients,  
$\frac{d}{dt} \Gamma_t^\wbdot$
define global, coordinate and frame independent sections of Weyl bundles. In turn the same stays true for $\gtder \gend{\Gamma}^\wbdot_t$.
The following theorem (5.4.3 of \cite{fedosovbook}) builds isomorphism theory for Fedosov algebras.
\begin{thrm}
\label{triv_trhm_liouv}
Let 
\begin{equation*}
D^\wbdot_t = \gconn^\wbdot_t +\ibh [\gamdot (t) \gcommt{t} \cdot\,]
\end{equation*}
be a homotopy of Abelian connections parameterized by $t \in [0,1]$, and let $\hdot(t)$ be a $t$-dependent $C$-section of $\wbdot$ (called Hamiltonian) satisfying $\mathrm{deg}(\hdot(t)) \geq 3$ and such that 
\begin{equation}
\label{hamiltoniancond}
D^\wbdot_t \hdot(t) - \gtder(\gend{\Gamma}_t^\wbdot+\gamdot(t))=\lambda(t)
\end{equation}
for some scalar $1$-form $\lambda(t)$.
Then, the equation
\begin{equation}
\label{triv_heis}
\gtder a+\frac{\rmi}{h}[\hdot(t)\gcommt{t} a]=0
\end{equation}
has the unique solution $a(t)$ for any given $a(0) \in \wbdot \otimes \Lambda$ and the mapping $a(0) \mapsto a(t)$ is isomorphism for any $t \in [0,1]$. Moreover, $a(0) \in \wbund^{D_0}_\dotmark$ if and only if $a(t) \in \wbund^{D_t}_\dotmark$.
\end{thrm}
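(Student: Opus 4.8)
The plan is to regard the evolution equation (\ref{triv_heis}) as a linear first-order system in $t$ and to exploit the degree-raising character of the Hamiltonian term, following Fedosov's proof of 5.4.3 adapted to the generalized product. I abbreviate $L_t:=\gtder+\ibh[\hdot(t)\gcommt{t}\cdot\,]$, so that (\ref{triv_heis}) reads $L_t a=0$, i.e.\ $\frac{d}{dt}a=-\big(g^{-1}_t\frac{dg_t}{dt}\big)a-\ibh[\hdot(t)\gcommt{t}a]$. Since $\deg\hdot(t)\geq 3$, the operator $\ibh[\hdot(t)\gcommt{t}\cdot\,]$ raises degree by at least one (the $h^{-1}$ being compensated by the single $y$-contraction in the leading term of the bracket, exactly as in the degree counting used for Theorem \ref{fedo_abel}), and by Lemma \ref{cs_lmm} it produces no negative powers of $h$ because $\hdot(t)$ is a $C$-section. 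Decomposing $a=\sum_m P_m(a)$ and projecting onto fixed degree $m$, the term $\frac{d}{dt}P_m(a)$ together with the degree-preserving part of $g^{-1}_t\frac{dg_t}{dt}$ couples only to $P_m(a)$, whereas the degree-raising part of $g^{-1}_t\frac{dg_t}{dt}$ and the whole bracket involve only the lower components $P_{m'}(a)$ with $m'<m$. Thus each degree furnishes an inhomogeneous linear ODE for $P_m(a)(t)$ whose inhomogeneity is already fixed at lower degrees; standard linear ODE theory yields a unique solution matching the prescribed $P_m(a)(0)$ and smooth in $x$, and induction on $m$ produces the unique global $a(t)$.

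Linearity of (\ref{triv_heis}) makes $a(0)\mapsto a(t)$ linear, and integrating the equation from $t$ back to $0$ provides a two-sided inverse, so the map is bijective. It is moreover an algebra isomorphism: $\gtder$ is a $0$-derivation of $\gcirc_t$, and since $\hdot(t)$ is a $0$-form the inner map $\ibh[\hdot(t)\gcommt{t}\cdot\,]$ is a $0$-derivation of $\gcirc_t$ as well, so $L_t$ is a $0$-derivation. Hence $L_t a=0$ and $L_t b=0$ force $L_t(a\gcirc_t b)=0$, and by the uniqueness just proved the solution with initial datum $a(0)\gcirc_0 b(0)$ equals $a(t)\gcirc_t b(t)$. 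Therefore $a(0)\mapsto a(t)$ is an isomorphism of $(\wbdot\otimes\Lambda,\gcirc_0)$ onto $(\wbdot\otimes\Lambda,\gcirc_t)$.

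The crux is to show that $L_t$ commutes with $D^\wbdot_t$. I would expand the graded commutator $[L_t,D^\wbdot_t]=L_tD^\wbdot_t-D^\wbdot_tL_t$ into its four cross-terms, using $D^\wbdot_t=\gconn^\wbdot_t+\ibh[\gamdot(t)\gcommt{t}\cdot\,]$. The piece $[\gtder,\gconn^\wbdot_t]$ equals $\ibh[\gtder\gend{\Gamma}_t^\wbdot\gcommt{t}\cdot\,]$ by (\ref{gdergconn}); using that $\gtder$ and $\gconn^\wbdot_t$ are derivations of $\gcirc_t$, the mixed terms reduce to $\ibh[\gtder\gamdot(t)\gcommt{t}\cdot\,]$ and $-\ibh[\gconn^\wbdot_t\hdot(t)\gcommt{t}\cdot\,]$; and the graded Jacobi identity combines the two brackets into $\ibh[(\ibh[\hdot(t)\gcommt{t}\gamdot(t)])\gcommt{t}\cdot\,]$. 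Collecting these, $[L_t,D^\wbdot_t]=\ibh[\Xi\gcommt{t}\cdot\,]$ with $\Xi=\gtder\gend{\Gamma}_t^\wbdot+\gtder\gamdot(t)-\gconn^\wbdot_t\hdot(t)+\ibh[\hdot(t)\gcommt{t}\gamdot(t)]$. Since $\hdot(t)$ is a $0$-form one has $D^\wbdot_t\hdot(t)=\gconn^\wbdot_t\hdot(t)-\ibh[\hdot(t)\gcommt{t}\gamdot(t)]$, so the Hamiltonian condition (\ref{hamiltoniancond}) gives exactly $\Xi=-\lambda(t)$. As $\lambda(t)$ is a scalar $1$-form it is central for $\gcirc_t$, whence $\ibh[\lambda(t)\gcommt{t}\cdot\,]=0$ and $[L_t,D^\wbdot_t]=0$.

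The flatness statement then follows by uniqueness. Given a solution $a(t)$ of (\ref{triv_heis}), put $b(t):=D^\wbdot_t a(t)$; commutativity yields $L_t b(t)=D^\wbdot_t L_t a(t)=0$, so $b$ solves the same evolution equation, and by the uniqueness of the first paragraph $b(t)$ vanishes for all $t$ precisely when $b(0)=D^\wbdot_0 a(0)=0$. This is the asserted equivalence $a(0)\in\wbund^{D_0}_\dotmark\Leftrightarrow a(t)\in\wbund^{D_t}_\dotmark$. The main obstacle is the commutator computation culminating in $\Xi=-\lambda(t)$, where the Hamiltonian condition (\ref{hamiltoniancond}) is indispensable; the secondary point requiring care is the $C$-section bookkeeping for the $\ebund$ case, which guarantees that no negative powers of $h$ appear and is controlled throughout by Lemma \ref{cs_lmm}.
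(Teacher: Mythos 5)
Your proposal is correct and takes essentially the same route as the paper: uniqueness and existence rest on the degree-raising property of $\ibh[\hdot(t)\gcommt{t}\cdot\,]$ (the paper iterates the integrated equation, you solve degree-by-degree linear ODEs -- a presentational difference only), the isomorphism property follows from the derivation property of the evolution operator plus uniqueness, and the flatness equivalence comes from the same computation showing that $D^\wbdot_t a(t)$ solves the evolution equation, with the Hamiltonian condition (\ref{hamiltoniancond}) reducing the obstruction to a central scalar $1$-form. Your packaging of that computation as the operator identity $[L_t,D^\wbdot_t]=\ibh[\Xi\gcommt{t}\cdot\,]$ with $\Xi=-\lambda(t)$ is just an abstract rephrasing of the paper's chain of equalities for $\gtder D^\wbdot_t a$.
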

\begin{proof}
Using definition of $\gtder$ and integrating (\ref{triv_heis}) one obtains
\begin{equation}
a(t)=g^{-1}_t g_0 a(0)-\ibh g^{-1}_t \int_0^t g_\tau [\hdot(\tau)\gcommt{\tau} a(\tau)] d\tau
\end{equation}
The operator $\ibh g^{-1}_t \int_0^t g_\tau [\hdot(\tau)\gcommt{\tau} \cdot\,]$ raises degree if $\mathrm{deg}(\hdot(t)) \geq 3$, and  defines unique iterative solution $a(t)$. Thus, the mapping $a(0) \mapsto a(t)$ is indeed bijective. From Leibniz rule we infer that $a(0) \gcirc_0 b(0) \mapsto a(t) \gcirc_t b(t)$, hence it is isomorphism. To verify its behavior on flat sections let us suppose that $a(t)$ is a solution of (\ref{triv_heis}) and calculate
\begin{multline*}
\gtder D^\wbdot_t a \stackrel{(\ref{gdergconn})}{=}\gconn^\wbdot_t \gtder a +\ibh [\gtder \gend{\Gamma}^\wbdot_t \gcommt{t} a]+
\ibh [\gtder  \gamdot \gcommt{t} a] + \ibh [  \gamdot \gcommt{t} \gtder a]\\
= D^\wbdot_t \gtder a +\ibh [\gtder (\gend{\Gamma}^\wbdot_t + \gamdot)\gcommt{t} a] \\
\stackrel{(\ref{triv_heis})}{=}
-\ibh [\hdot \gcommt{t} D^\wbdot_t a]+\ibh [\gtder (\gend{\Gamma}^\wbdot_t + \gamdot) -D^\wbdot_t \hdot \gcommt{t} a]
\end{multline*}
Thus, provided that $D^\wbdot_t \hdot -\gtder (\gend{\Gamma}^\wbdot_t + \gamdot)$ is a scalar $1$-form, $D^\wbdot_t a$ must be a solution of (\ref{triv_heis}). From uniqueness of solutions of (\ref{triv_heis}) one infers that $D^\wbdot_0 a(0)=0$ if and only if $D^\wbdot_t a(t)=0$.
\end{proof}
The very first observation concerning theorem \ref{triv_trhm_liouv} is that there is some consistency condition for $\lambda(t)$. Indeed, calculating $D^\wbdot_t D^\wbdot_t \hdot(t)$ one obtains from Abelian property of $D^\wbdot_t$ and (\ref{hamiltoniancond})
\begin{equation}
0=D^\wbdot_t D^\wbdot_t \hdot(t)=D^\wbdot_t \gtder(\gend{\Gamma}_t^\wbdot+\gamdot(t))+ \rmd\lambda(t)
\end{equation}
Straightforward but a bit longish calculation shows that 
\begin{equation}
\label{dconnon-coef}
D^\wbdot_t \gtder(\gend{\Gamma}_t^\wbdot+\gamdot(t))=\gtder \Omega^\wbdot_t=\frac{d}{dt} \Omega^\wbdot_t
\end{equation}
Hence
\begin{equation}
\label{dlambdacond}
\rmd\lambda(t) = - \frac{d}{dt} \Omega^\wbdot_t
\end{equation}
This observation results in the following theorem.
\begin{thrm}
\label{isocurvthrm}
Let $D^\wbdot_t$ be homotopy of Abelian connections
\begin{equation*}
D^\wbdot_t = \gconn^\wbdot_t +\ibh [\gamdot (t) \gcommt{t} \cdot\,]
\end{equation*}
A Hamiltonian generating isomorphism between corresponding algebras of flat sections exists iff there is some scalar $1$-form $\lambda(t)$ such that $\rmd\lambda(t) = - \frac{d}{dt} \Omega^\wbdot_t$.
%
\end{thrm}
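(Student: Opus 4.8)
The forward implication is essentially already in hand. The discussion preceding the theorem shows that if a Hamiltonian $\hdot(t)$ obeying (\ref{hamiltoniancond}) exists, then applying the Abelian identity $(D^\wbdot_t)^2=0$ to $\hdot(t)$ together with (\ref{dconnon-coef}) forces $\rmd\lambda(t)=-\frac{d}{dt}\Omega^\wbdot_t$, which is (\ref{dlambdacond}). So the genuine content of the theorem is the converse, and I would organise the proof entirely around manufacturing a Hamiltonian out of a given primitive $\lambda(t)$.

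The plan is to read (\ref{hamiltoniancond}) as an inhomogeneous equation for $\hdot(t)$,
\begin{equation*}
D^\wbdot_t\hdot(t)=b(t),\qquad b(t)=\gtder(\gend{\Gamma}_t^\wbdot+\gamdot(t))+\lambda(t),
\end{equation*}
and to solve it at each fixed $t$ by means of theorem \ref{triv_deqlemma}. First I would verify the integrability condition $D^\wbdot_t b(t)=0$: since $\lambda(t)$ is a scalar form it is central and $D^\wbdot_t$ reduces to $\rmd$ on it, so $D^\wbdot_t\lambda(t)=\rmd\lambda(t)$; combining this with (\ref{dconnon-coef}) gives $D^\wbdot_t b(t)=\frac{d}{dt}\Omega^\wbdot_t+\rmd\lambda(t)$, which vanishes precisely by the hypothesis. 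Theorem \ref{triv_deqlemma} then supplies the candidate $\hdot(t)=-\qmdot\delta^{-1}b(t)$. Checking that this candidate is a $C$-section is routine, since $b(t)$ is built from $C$-sections (the central $\lambda(t)$, and $\gtder(\gend{\Gamma}_t^\wbdot+\gamdot(t))=g_t^{-1}\tfrac{d}{dt}\Gamma_t^\wbdot+\gtder\rsdot$, because $\gtder$, $g_t$, $g_t^{-1}$ are $C$-operators), while $\delta^{-1}$ and $\qmdot$ are $C$-operators as well.

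The real obstacle is the degree requirement $\deg\hdot(t)\geq 3$, without which theorem \ref{triv_trhm_liouv} does not apply, as the iteration producing the isomorphism needs $\hdot$ to raise degree. The contribution $\gtder(\gend{\Gamma}_t^\wbdot+\gamdot(t))$ is harmless of degree $\geq 2$ (the only degree-one piece of $\gamdot$, namely $-g_t^{-1}s$, is assembled from the $t$-independent $\omega$ and so is annihilated by $\gtder$). The danger is the scalar $\lambda(t)$, whose order-$h^0$ component has degree $0$, so that $\delta^{-1}$ would pull the solution down to degree $1$. The way out is to exploit the freedom in the primitive: because the Weyl curvature has the form $\Omega^\wbdot_t=-\omega+h\kappa$ with $\omega$ independent of $t$, the right-hand side $\frac{d}{dt}\Omega^\wbdot_t=h\frac{d\kappa}{dt}$ carries no order-$h^0$ term, whence the order-$h^0$ part of $\lambda(t)$ is closed and may be subtracted off without altering $\rmd\lambda(t)$. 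I would therefore normalise $\lambda(t)$ so that $\deg\lambda(t)\geq 2$; then $\deg b(t)\geq 2$, $\delta^{-1}$ raises this to $\geq 3$, and since $\qmdot$ preserves the lowest degree the solution indeed satisfies $\deg\hdot(t)\geq 3$. With $\hdot(t)$ now a genuine Hamiltonian obeying (\ref{hamiltoniancond}), theorem \ref{triv_trhm_liouv} delivers the sought isomorphism, completing the converse and hence the equivalence.
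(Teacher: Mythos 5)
Your proof is correct and follows essentially the same route as the paper: the forward implication is read off from the consistency computation preceding the theorem, and the converse constructs the Hamiltonian as $\hdot(t)=-{Q_{\dotmark}}_t\delta^{-1}\big(\gtder(\gend{\Gamma}_t^\wbdot+\gamdot(t))+\lambda(t)\big)$ by applying theorem \ref{triv_deqlemma}, the integrability condition $D^\wbdot_t\big(\gtder(\gend{\Gamma}_t^\wbdot+\gamdot(t))+\lambda(t)\big)=0$ following from (\ref{dconnon-coef}) and (\ref{dlambdacond}). Your additional checks — that the candidate is a $C$-section, and the normalization removing the closed order-$h^0$ part of $\lambda(t)$ so that $\deg\hdot(t)\geq 3$ holds — are correct and supply details that the paper's own proof leaves implicit.
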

\begin{proof} 
If a Hamiltonian exists, then consistency condition (\ref{dlambdacond}) holds as calculated before. Conversely if there is $\lambda(t)$ satisfying (\ref{dlambdacond}) then the Hamiltonian can be constructed as a solution of equation 
\begin{equation*}
D^\wbdot_t \hdot(t) = \gtder(\gend{\Gamma}_t^\wbdot+\gamdot(t))+\lambda(t)
\end{equation*}
Due to theorem \ref{triv_deqlemma} such solution exists because $D^\wbdot_t \Big(\gtder(\gend{\Gamma}_t^\wbdot+\gamdot(t))+\lambda(t)\Big)=0$ by relations (\ref{dconnon-coef}) and (\ref{dlambdacond}).
Thus, the Hamiltonian can be taken as 
\begin{equation*}
\hdot(t)= -{Q_{\dotmark}}_t \delta^{-1} \Big(\gtder(\gend{\Gamma}_t^\wbdot+\gamdot(t))+\lambda(t) \Big)
\end{equation*}
where ${Q_{\dotmark}}_t$ denotes quantization map corresponding to $D^\wbdot_t$.
\end{proof}
\begin{rmrk}
\label{alliso}
The immediate consequence of this theorem is that algebras generated by Abelian connections described in theorem \ref{fedo_abel} are isomorphic in the sense of theorem \ref{triv_trhm_liouv}, if and only if their curvatures are in the same cohomology class. Indeed, if they are isomorphic then condition (\ref{dlambdacond}) holds, which, after integration, yields that the curvatures must be cohomolgical. On the other hand, given two sets of input data for theorem \ref{fedo_abel} (ie{.} fiberwise product, symplectic connection, connection in the bundle, normalizing section $\mudot$ and curvature), we can always homotopically transform one to another and, in turn, obtain homotopy of Abelian connections. Thus, the only obstruction is the requirement (\ref{dlambdacond}), which can be satisfied for curvatures in the same cohomology class by homotopy $\Omega^\wbdot_t=\Omega^\wbdot_0  + t(\Omega^\wbdot_1 - \Omega^\wbdot_0)$.
\end{rmrk}

Since locally one can always find $\lambda(t)$ such that $\frac{d}{d t}\Omega^\wbdot_t=-d \lambda(t)$, then locally all Abelian connections of theorem \ref{fedo_abel} give rise to isomorphic algebras. In particular, all such algebras are locally isomorphic to the \emph{trivial algebra}, ie{.}  Moyal algebra, which is generated in theorem~\ref{fedo_abel}, by fiberwise Moyal product, $\kappa=0$, $\mudot=0$ and flat connections $\connsymp$, $\connbund$. The useful consequence of this fact is the following observation (compare \cite{fedosovbook}, corollary 5{.}5{.}2).
\begin{lmm}
\label{scalarcommlmm}
If $[s\gcomm a]=0$ for all $a\in \wbund^{D}_\dotmark$, then $s$ is some scalar form.
\end{lmm}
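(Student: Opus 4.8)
The plan is to reduce the statement to the trivial (Moyal) algebra and there carry out an explicit fibrewise computation. Since being a scalar form is a pointwise condition, it suffices to establish the claim on a neighbourhood of an arbitrary point. On such a neighbourhood, by the discussion preceding the lemma (see remark~\ref{alliso}), the algebra $(\wbdot \otimes \Lambda, \gcirc)$ with its Abelian connection $D$ is isomorphic, in the sense of theorem~\ref{triv_trhm_liouv}, to the trivial algebra generated in theorem~\ref{fedo_abel} by $g=\id$, $\kappa=0$, $\mudot=0$ and flat $\connsymp$, $\connbund$; the obstruction $\rmd\lambda(t)=-\frac{d}{dt}\Omega^\wbdot_t$ is automatically solvable because every closed form is locally exact. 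The resulting map $A$ is an algebra isomorphism of the whole Weyl bundle that preserves form degree, carries $[\cdot\gcomm\cdot]$ to the Moyal commutator $[\cdot\ccomm\cdot]$, and bijects $\wddot$ onto the flat sections of the trivial connection. As an algebra isomorphism sends centre to centre, and the centre of each algebra is exactly $\Lambda$ (the only central elements of $\wbe\otimes\Lambda$ and $\wbc\otimes\Lambda$ are the scalar forms), $A$ restricts to a bijection of $\Lambda$ onto itself. Hence $s$ commutes with all of $\wddot$ iff $A(s)$ commutes with all trivial flat sections, and $s\in\Lambda$ iff $A(s)\in\Lambda$; so it is enough to treat the trivial algebra.

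In the trivial algebra the flat sections are explicit: in Darboux coordinates with flat connection they are the Taylor lifts $\qmdot(f)(x,y)=f(x+y)$, and in particular $\qmdot(x^i)=x^i+y^i$. I would first test $s$ against these. A direct Moyal computation gives, for each form-degree-homogeneous component of $s$,
\[
[s \ccomm (x^i+y^i)]=[s\ccomm y^i]=\rmi h\,\omega^{ij}\frac{\partial s}{\partial y^j},
\]
since $x^i$ is central. Requiring this to vanish for every $i$ and invoking the non-degeneracy of $\omega^{ij}$ forces $\partial s/\partial y^j=0$ for all $j$. Because the hypothesis splits over form degrees, the full $s$ is then independent of $y$. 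For $\dotmark=\mathbb{C}$ this already says $s\in\Lambda$, and transporting back through $A$ closes this case.

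For $\dotmark=\ebund$ one further step pins down the endomorphism part. Once $s$ is $y$-independent, all its $y$-derivatives vanish, so the Moyal product against any factor collapses to the pointwise endomorphism product; thus for arbitrary $E\in C^\infty(\sympman,\ebund)$ the flat section $\qme(E)$ satisfies $[s\ccomm \qme(E)]=[s,\qme(E)]$, the fibrewise commutator of endomorphisms. Evaluating at $y=0$ and using $\qme(E)|_{y=0}=E$ yields $[s,E]=0$ for every $E$; matching powers of $h$ then shows each coefficient of $s$ commutes with all endomorphisms and is therefore a scalar multiple of the identity. Hence $s\in\Lambda$ in the trivial algebra, and preservation of $\Lambda$ under $A$ returns the conclusion for the original one.

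The main obstacle is the reduction step rather than the computation. One must ensure that the local isomorphism to the trivial algebra genuinely exists in the endomorphism case, where the $C$-section and negative-power-of-$h$ constraints of theorems~\ref{fedo_abel} and~\ref{triv_trhm_liouv} are in force, and that it transports the centre $\Lambda$ faithfully onto $\Lambda$. Both points are furnished by the preceding development, so no new estimate is needed; the fibrewise commutator calculation above is then entirely routine.
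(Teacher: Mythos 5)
Your proposal is correct and follows essentially the same route as the paper: reduce to the trivial (Moyal) algebra via the local isomorphisms of theorem \ref{triv_trhm_liouv} (available by remark \ref{alliso} since the obstruction $\rmd\lambda(t)=-\frac{d}{dt}\Omega^\wbdot_t$ is locally solvable), verify the statement there by a direct fibrewise computation, and transport the conclusion back because scalar forms are respected by these isomorphisms. Your only cosmetic deviation is justifying the transport via centre-to-centre preservation rather than the paper's observation that scalar forms are literally left unmodified by the isomorphisms; the explicit test against the lifts $x^i+y^i$ and against lifted endomorphisms fills in exactly the verification the paper declares "easily verified."
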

This fact can be easily verified for the trivial algebra. As a local statement it can be transported to arbitrary $\wbund^{D}_\dotmark$, because scalar forms remain unmodified by considered isomorphisms.

The next step is to observe that the mapping introduced in theorem \ref{triv_trhm_liouv} can be written in a bit more explicit form as
\begin{equation}
\label{uisodef}
T_t(a(0)):=a(t)=U_\dotmark^{-1}(t) \gcirc_t (g^{-1}_t g_0 a(0)) \gcirc_t U_\dotmark(t)
\end{equation}
where $U_\dotmark (t)$ is the solution of the equation
\begin{equation}
\label{udiffeq}
\gtder U_\dotmark (t) = \ibh U_\dotmark (t) \gcirc_t \hdot(t) 
\end{equation}
with $U_\dotmark (0)=1$. Notice that (\ref{udiffeq}) can be rewritten as 
\begin{equation}
U_\dotmark (t)=1+\ibh g^{-1}_t \int_0^t g_\tau (U_\dotmark (\tau) \gcirc_\tau \hdot(\tau)) d\tau
\end{equation}
and by iteration method it follows that $U_\dotmark (t)$ is defined uniquely. However, the operator $\ibh g^{-1}_t \int_0^t g_\tau ((\cdot) \gcirc_\tau \hdot(\tau)) d\tau$ may produce negative powers\footnote{Strictly, this means that $U_\dotmark (t)$ belongs to what is called \emph{extended Weyl bundle} in \cite{fedosovbook}. As the notation of the present paper seems to be cumbersome enough without precise dealing with this issue, we are not going to introduce any separate symbol for those extended bundles.} of $h$, but these terms do not introduce negative total degree (\ref{degdef}), and appear in finite number for each total degree, thus we avoid problem of infinite series of monomials at fixed total degree. The inverse $U_\dotmark^{-1}(t)$ is taken with respect to $\gcirc_t$ and it is uniquely determined by the equation
\begin{equation}
\label{uinvdiffeq}
\gtder U^{-1}_\dotmark (t) = - \ibh  \hdot(t) \gcirc_t U^{-1}_\dotmark (t)  
\end{equation}
Notice that composing two mappings of type (\ref{uisodef}) one obtains a mapping of the same type, ie{.} for $U_\dotmark$ and $U'_\dotmark$ defining respectively mappings from $\gcirc_0$ to $\gcirc_1$ and from $\gcirc_1$ to $\gcirc_2$ their composition yields
\begin{equation}
\label{ucomp}
U_\dotmark^{\prime -1}\ \gcirc_2 \bigg(g^{-1}_2 g_1 
\Big(U_\dotmark^{ -1} \gcirc_1 (g^{-1}_1 g_0 a) \gcirc_1 U_\dotmark\Big)\bigg) 
\gcirc_2 U'_\dotmark
=\big(U''_\dotmark\big)^{-1} \gcirc_2 (g^{-1}_2 g_0 a) \gcirc_2 U''_\dotmark
\end{equation}
with $U''_\dotmark=\Big((g^{-1}_2 g_1 U_\dotmark) \gcirc_2 U'_\dotmark \Big)$ describing mapping from $\gcirc_0$ to $\gcirc_2$ and $\big(U''_\dotmark\big)^{-1}$ being inverse with respect to $\gcirc_2$. It follows that the inverse mapping $T_t^{-1}$ can be written  as
\begin{equation}
\label{uinvisodef}
T^{-1}_t(a)=\big(g_0^{-1} g_t U_\dotmark (t)\big) \gcirc_0 (g^{-1}_0 g_t a) \gcirc_0 \big(g_0^{-1} g_t U^{-1}_\dotmark(t)\big)
\end{equation}
where $U_\dotmark (t)$ and $U_\dotmark^{-1}(t)$ are the same quantities that in (\ref{uisodef}).
\begin{rmrk}
\label{invhmtprmrk}
For technical reasons which become clear later, we would like to emphasize the following fact. Let $T_t$ be homotopy of isomorphisms generated by homotopy $D^\wbdot_t$, Hamiltonian $\hdot(t)$ and scalar form $\lambda(t)$. The inverse $T_1^{-1}$ of an ``endpoint'' $T_1$ can be always represented as an ``endpoint'' $T'_1=T_1^{-1}$ of homotopy $T'_t$ generated by $D^{\prime\wbdot}_t=D^\wbdot_{1-t}$, Hamiltonian $\hdot'(t)=-\hdot(1-t)$ and  $\lambda'(t)=-\lambda(1-t)$. To verify that $T'_t$ is well defined, we consequently put $\gend{\Gamma}_t^{\prime\wbdot}=\gend{\Gamma}_{1-t}^{ \wbdot}$, $\gamdot'(t)=\gamdot(1-t)$, $g'_t=g_{1-t}$, $\gtder'=g^{\prime\, -1}_t \frac{d}{dt} g'_t$ and let $\gcirc'_t$ stand for $\gcirc_{1-t}$. Then $D^{\prime\wbdot}_t \hdot'(t) = \gtder'(\gend{\Gamma}_t^{\prime\wbdot}+\gamdot'(t))+\lambda'(t)$ holds as a result of analogous relation for $T_t$. Moreover, one can observe that $U'_\dotmark (t)=(g_t^{\prime -1}g_1 U^{-1}_\dotmark(1)) \gcirc'_t U_\dotmark (1-t)$ is the solution of $\gtder' U'_\dotmark (t) = \ibh U'_\dotmark (t) \gcirc'_t \hdot'(t)$, satisfying $U'_\dotmark(0)=1$, provided that $U_\dotmark(t)$ is the solution of corresponding equation for $T_t$. Hence, by (\ref{uinvisodef}), the relation $T'_1(a)=U^{\prime -1}_\dotmark (1) \gcirc'_1 (g^{\prime -1}_1 g'_0 a) \gcirc'_1 U'_\dotmark (1)=(g_0^{-1}g_1 U_\dotmark(1)) \gcirc_0 (g^{-1}_0 g_1 a) \gcirc_0 (g_0^{-1}g_1 U^{-1}_\dotmark(1))=T^{-1}_1 (a)$ indeed holds\footnote{Here $U^{\prime -1}_\dotmark (1)$ is inverese with respect to $\gcirc'_1$, ie. $\gcirc_0$, while $U^{-1}_\dotmark (1)$ is inverese for $\gcirc_1$.}.
\end{rmrk}

Using isomorphism $T_t$ one can push-forward arbitrary connection $\nabla^\wbdot$ (defined for fiberwise product $\gcirc_0$) \begin{equation}
\label{pushDdef}
T_{t*}\nabla^\wbdot = T_t \nabla^\wbdot T_t^{-1}
\end{equation}
The immediate observation coming directly from the definition (\ref{pushDdef}) and properties of $T_t$ is that $T_{t*}\nabla^\wbdot$ is $+1$-derivation with respect to $\gcirc_t$ and also an Abelian connection if $\nabla^\wbdot$ is such. Inserting (\ref{uisodef}) and (\ref{uinvisodef}) into (\ref{pushDdef}) one can compute that
\begin{equation}
\label{pushed_conn_form1}
T_{t*}\nabla^\wbdot = g^{-1}_t g_0 \nabla^\wbdot g^{-1}_0 g_t + [U^{-1}_\dotmark(t) \gcirc_t \big(g^{-1}_t g_0 \nabla^\wbdot g^{-1}_0 g_t U_\dotmark(t)\big) \gcommt{t} \cdot\, ]
\end{equation}
Let $\gend{\rmd}_t= g^{-1}_t \rmd g_t$ and let
\begin{equation}
\nabla^\wbdot= \gend{\rmd}_0 + \ibh [\Upsilon^\wbdot \gcommt{0}\cdot\,]
\end{equation}
be local representation of $\nabla^\wbdot$. Since
\begin{equation}
g^{-1}_t g_0 \nabla^\wbdot g^{-1}_0 g_t= \gend{\rmd}_t + \ibh[g_t^{-1} g_0\Upsilon^\wbdot \gcommt{t} \cdot\,]
\end{equation}
it is possible to write down the following local formula
\begin{equation}
T_{t*}\nabla^\wbdot = \gend{\rmd}_t + \ibh [g_t^{-1} g_0 \Upsilon^\wbdot -\rmi h U^{-1}_\dotmark(t) \gcirc_t \big(g^{-1}_t g_0 \nabla^\wbdot g^{-1}_0 g_t U_\dotmark(t)\big) \gcommt{t} \cdot\, ]
\end{equation}
and we are justified to define local push-forward of $\Upsilon^\wbdot$
\begin{equation}
\label{pushedcoefdef}
T_{t*}\Upsilon^\wbdot= g_t^{-1} g_0 \Upsilon^\wbdot - \rmi h U^{-1}_\dotmark(t) \gcirc_t \big(g^{-1}_t g_0 \nabla^\wbdot g^{-1}_0 g_t U_\dotmark(t)\big)
\end{equation}
Suppose that $T$ maps form $\gcirc_0$ to $\gcirc_1$ and $T'$ maps form $\gcirc_1$ to $\gcirc_2$. Then
\begin{equation}
T'_*T_*\nabla^\wbdot =(T'T)_* \nabla^\wbdot
\end{equation}
and also, after some calculations, one may check that
\begin{equation}
\label{mltplpshes}
T'_*T_*\Upsilon^\wbdot =(T'T)_* \Upsilon^\wbdot
\end{equation}
Clearly, we are able to rewrite Abelian connections $D^\wbdot_t$ in the local form $D^\wbdot_t= \gend{\rmd}_t + \ibh [ \gend{\Gamma}_t^\wbdot+ \gamdot (t)\gcommt{t}\cdot\,]$.
With above notations an useful lemma on push-forward of $D^\wbdot_0$  can be formulated (previously considered in~\cite{dobrski-sw}).
\begin{lmm}
\label{pushedconnslmm}
Connections $T_{t*}D^\wbdot_0$ and $D^\wbdot_t$ coincide. Moreover if $T_t$ is generated by Hamiltonian satisfying $D^\wbdot_t \hdot(t) - \gtder(\gend{\Gamma}_t^\wbdot+\gamdot(t))=\lambda (t)$ with scalar $1$-form $\lambda(t)$ then 
\begin{equation}
\label{pushedconnsectrel}
T_{t*}\Big(\gend{\Gamma}_0^\wbdot + \gamdot (0) \Big) =  \gend{\Gamma}_t^\wbdot + \gamdot (t) + \int_0^t \lambda (\tau) d\tau
\end{equation}
\end{lmm}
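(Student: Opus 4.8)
The plan is to reduce both assertions to a single statement about the coefficient $1$-form $\Upsilon_t := T_{t*}\big(\gend{\Gamma}_0^\wbdot + \gamdot(0)\big)$ delivered by the local push-forward formula (\ref{pushedcoefdef}). Writing $D^\wbdot_t$ in its local form $\gend{\rmd}_t + \ibh[\gend{\Gamma}_t^\wbdot + \gamdot(t)\gcommt{t}\cdot\,]$ and, via the local expression preceding (\ref{pushedcoefdef}), writing $T_{t*}D^\wbdot_0 = \gend{\rmd}_t + \ibh[\Upsilon_t\gcommt{t}\cdot\,]$, one sees that the two connections share the same $\gend{\rmd}_t$ part, so their difference is the inner operator $\ibh[\xi(t)\gcommt{t}\cdot\,]$ with $\xi(t)=\Upsilon_t-\big(\gend{\Gamma}_t^\wbdot + \gamdot(t)\big)$. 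The whole lemma then amounts to showing that $\xi(t)$ is a scalar form (which yields the first claim, since scalar forms are $\gcirc_t$-central, so the commutator vanishes and the two connections coincide) and that $\xi(t)=\int_0^t\lambda(\tau)\,\rmd\tau$ (which is (\ref{pushedconnsectrel})).

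For the first claim I would argue on flat sections. By theorem \ref{triv_trhm_liouv} the map $T_t$ carries $\wbund^{D_0}_\dotmark$ bijectively onto $\wbund^{D_t}_\dotmark$, so for any $b=T_t a\in\wbund^{D_t}_\dotmark$ one has $T_{t*}D^\wbdot_0\, b = T_t D^\wbdot_0 a = 0$; thus both $T_{t*}D^\wbdot_0$ and $D^\wbdot_t$ annihilate every element of $\wbund^{D_t}_\dotmark$. Consequently $\ibh[\xi(t)\gcommt{t}a]=0$ for all $a\in\wbund^{D_t}_\dotmark$, and lemma \ref{scalarcommlmm} forces $\xi(t)$ to be a scalar form. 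Being $\gcirc_t$-central it gives $[\xi(t)\gcommt{t}\cdot\,]\equiv 0$, hence $T_{t*}D^\wbdot_0=D^\wbdot_t$.

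For the second claim the key step is the identity $\gtder\Upsilon_t = D^\wbdot_t\hdot(t)$. To obtain it I would apply the $0$-derivation $\gtder$ to the explicit expression (\ref{pushedcoefdef}) for $\Upsilon_t$, using that $\gtder$ commutes with $\gend{\rmd}_t$, that it obeys an analogue of the commutation rule (\ref{gdergconn}) when passed through the conjugated connection $g_t^{-1}g_0 D^\wbdot_0 g_0^{-1}g_t$, and that it satisfies the two evolution equations (\ref{udiffeq}) and (\ref{uinvdiffeq}) for $U_\dotmark(t)$ and $U^{-1}_\dotmark(t)$. The genuinely structural input is that $\gtder\big(g_t^{-1}g_0(\gend{\Gamma}_0^\wbdot+\gamdot(0))\big)=0$, because $g_0(\gend{\Gamma}_0^\wbdot+\gamdot(0))$ is $t$-independent; this is exactly what makes all the connection-coefficient contributions cancel, leaving a single commutator which, by (\ref{pushed_conn_form1}) together with the already-proved equality $T_{t*}D^\wbdot_0=D^\wbdot_t$, reassembles into $D^\wbdot_t\hdot(t)$.

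Finally I would combine $\gtder\Upsilon_t = D^\wbdot_t\hdot(t)$ with the Hamiltonian condition (\ref{hamiltoniancond}), rewritten as $\gtder(\gend{\Gamma}_t^\wbdot+\gamdot(t)) = D^\wbdot_t\hdot(t)-\lambda(t)$, to obtain $\gtder\xi(t)=\lambda(t)$. Since $\xi(t)$ is a scalar form (first claim) and $\gtder$ acts on scalar forms simply as $\tfrac{\rmd}{\rmd t}$ (the $y$-differential part of $g_t^{-1}\dot g_t$ annihilates them), this becomes $\tfrac{\rmd}{\rmd t}\xi(t)=\lambda(t)$; integrating from $0$, with $\xi(0)=0$ since $T_0=\id$, gives $\xi(t)=\int_0^t\lambda(\tau)\,\rmd\tau$, i.e.\ (\ref{pushedconnsectrel}). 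I expect the main obstacle to be the bookkeeping in the computation of $\gtder\Upsilon_t$: keeping the $\ibh$ factors, the graded-commutator signs, and the $\gcirc_t$-orderings straight so that the $U_\dotmark$-conjugation terms cancel cleanly and the derivation rule is applied to the correct $1$-form. The conceptual content -- the vanishing of that $t$-independent conjugate and the reduction to scalars -- is comparatively light.
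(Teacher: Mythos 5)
Your proposal is correct and follows essentially the same route as the paper: the first claim is proved exactly as there (both connections annihilate $\wbund^{D_t}_\dotmark$, so lemma \ref{scalarcommlmm} makes the difference of coefficients a scalar, hence central), and the second claim rests on the same computation of $\gtder T_{t*}\big(\gend{\Gamma}_0^\wbdot+\gamdot(0)\big)$ via (\ref{udiffeq}), (\ref{uinvdiffeq}) and the $t$-independence of $g_0\big(\gend{\Gamma}_0^\wbdot+\gamdot(0)\big)$, reassembled into $T_{t*}D^\wbdot_0\hdot(t)=D^\wbdot_t\hdot(t)$. The only cosmetic difference is the final integration: the paper applies $g_t^{-1}\int_0^t g_\tau\,\cdot\,d\tau$ to the whole relation, whereas you first pass to the scalar $\xi(t)$ and use $\gtder=\tfrac{\rmd}{\rmd t}$ on scalars -- these are equivalent.
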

\begin{proof}
From theorem \ref{triv_trhm_liouv} it follows that subalgebras of flat sections are the same for both considered connections, ie{.} $\wbund^{D_t}_\dotmark=\wbund^{T_{t*}D_0}_\dotmark$. Hence, for arbitrary $a \in \wbund^{D_t}_\dotmark$ it holds that $T_{t*}D^\wbdot_0 a=D^\wbdot_t a=0$. Locally this relation yields
\begin{equation}
[T_{t*}\Big(\gend{\Gamma}_0^\wbdot + \gamdot (0) \Big) \gcommt{t} a]=[\gend{\Gamma}_t^\wbdot + \gamdot (t)\gcommt{t} a]
\end{equation}
for all $a \in \wbund^{D_t}_\dotmark$. Thus, by lemma \ref{scalarcommlmm}, $T_{t*}\Big(\gend{\Gamma}_0^\wbdot + \gamdot (0) \Big)$ and $\gend{\Gamma}_t^\wbdot + \gamdot (t)$ differ by some scalar form, and consequently $T_{t*}D^\wbdot_0 a=D^\wbdot_t a$ for arbitrary, not necessarily flat, section~$a$. Using (\ref{udiffeq}), (\ref{uinvdiffeq}) and (\ref{pushed_conn_form1}) we calculate
\begin{multline*}
\gtder T_{t*}\Big(\gend{\Gamma}_0^\wbdot + \gamdot (0) \Big) \\
= -\rmi h \gtder U^{-1}_\dotmark(t) \gcirc_t \big(g^{-1}_t g_0 D^\wbdot_0 g^{-1}_0 g_t U_\dotmark(t)\big) -\rmi h  U^{-1}_\dotmark(t) \gcirc_t \big(g^{-1}_t g_0 D^\wbdot_0 g^{-1}_0 g_t \gtder U_\dotmark(t)\big)\\
=[U^{-1}_\dotmark(t) \gcirc_t \big(g^{-1}_t g_0 D^\wbdot_0 g^{-1}_0 g_t U_\dotmark(t)\big) \gcommt{t} \hdot(t)] +
g^{-1}_t g_0 D^\wbdot_0 g^{-1}_0 g_t \hdot(t) \\
= T_{t*}D^\wbdot_0 \hdot(t)=D^\wbdot_t \hdot(t)=\gtder(\gend{\Gamma}_t^\wbdot+\gamdot(t)) + \lambda (t)
\end{multline*}
and the formula (\ref{pushedconnsectrel}) can be obtained by application of $g^{-1}_t \int_0^t g_\tau \, \cdot \, d \tau$ to above relation.
\end{proof}
Now we are about formulating quite important fact concerning automorphisms of Fedosov algebras. (The lemma given below is essentially generalized variant of proposition 5{.}5{.}5 of \cite{fedosovbook}. Here, we reproduce it with different proof, which employs lemma \ref{pushedconnslmm}, and explicitly relates function $f$, used for rescaling, to $1$-forms $\lambda^{(i)}$ corresponding to involved isomorphisms).
\begin{lmm}
\label{locadjlmm}
Let $D^\wbdot_0,\dots D^\wbdot_{n-1}$ be Abelian connections defined with respect to fiberwise products $\gcirc_0,\dots,\gcirc_{n-1}$ for $n \geq 1$. Consider homotopies of isomorphisms $T_t^{(i)}$, such that \linebreak $T_1^{(1)}: \wbund^{D_0}_\dotmark \to \wbund^{D_1}_\dotmark, T_1^{(2)}: \wbund^{D_1}_\dotmark \to \wbund^{D_2}_\dotmark, \dots, T_1^{(n)}: \wbund^{D_{n-1}}_\dotmark \to \wbund^{D_0}_\dotmark$. Their composition  $T_1^{(n)} \dots T_1^{(1)}$ can be always locally represented as
\begin{equation}
T_1^{(n)} \dots T_1^{(1)} a = V^{-1}_\dotmark \gcirc_0 a \gcirc_0 V_\dotmark
\end{equation}
for arbitrary $a \in \wbdot$, and with $V_\dotmark$ being flat section belonging to $\wbund^{D_0}_\dotmark$.
\end{lmm}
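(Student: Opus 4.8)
The plan is to collapse the whole chain into a single inner automorphism for $\gcirc_0$ and then to rescale the conjugating section by a central factor so as to make it flat. First I would compose the endpoints: each $T_1^{(i)}$ has the form (\ref{uisodef}), and the composition rule (\ref{ucomp}) shows that the composite of two such maps is again of that form. Iterating (\ref{ucomp}) along $T_1^{(1)},\dots,T_1^{(n)}$ and using that the chain both starts and ends at the product $\gcirc_0$, so that the accumulated $g$-prefactor collapses to $g_0^{-1}g_0=\id$, I obtain a single section $U_\dotmark$ with $T_1^{(n)}\cdots T_1^{(1)}a=U_\dotmark^{-1}\gcirc_0 a\gcirc_0 U_\dotmark$. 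As for the individual factors discussed below (\ref{uisodef}), this $U_\dotmark$ a priori lives in the extended Weyl bundle.

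The core step is to show that $U_\dotmark$ fails to be flat only by a \emph{closed scalar} $1$-form, i.e. $D^\wbdot_0 U_\dotmark=\ibh F\gcirc_0 U_\dotmark$ with $F:=\sum_i\int_0^1\lambda^{(i)}(\tau)\,d\tau$. Applying lemma \ref{pushedconnslmm} to each $T^{(i)}_t$ and composing through (\ref{mltplpshes}), the push-forward of the coefficient $\gend{\Gamma}_0^\wbdot+\gamdot(0)$ of $D^\wbdot_0$ around the closed chain returns it shifted precisely by $F$ (the scalar shifts add up, being unaffected by the later isomorphisms). Comparing with the explicit push-forward (\ref{pushedcoefdef}), whose $g$-prefactor is again trivial, gives $U_\dotmark^{-1}\gcirc_0(D^\wbdot_0 U_\dotmark)=\ibh F$, with $F$ scalar. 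That $F$ is closed follows from $\rmd\lambda^{(i)}=-\tfrac{d}{dt}\Omega^\wbdot_t$ in (\ref{dlambdacond}): integrating and summing, the curvature contributions telescope to zero because the chain returns to $D^\wbdot_0$; equivalently $\rmd F=0$ falls out of $(D^\wbdot_0)^2 U_\dotmark=0$ once $F\wedge F=0$ is used. (Scalarness of $F$ can also be seen self-contained: expanding $D^\wbdot_0(U_\dotmark^{-1}\gcirc_0 a\gcirc_0 U_\dotmark)=0$ for flat $a$ yields $[(D^\wbdot_0 U_\dotmark)\gcirc_0 U_\dotmark^{-1}\gcommt{0}a]=0$, whence lemma \ref{scalarcommlmm} forces this $1$-form to be scalar.)

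Finally I would rescale. Working locally, write $F=\rmd\eta$ for a scalar function $\eta$ and set $V_\dotmark:=\exp(-\ibh\eta)\gcirc_0 U_\dotmark$. Since $\exp(-\ibh\eta)$ is central it cancels in the conjugation, so $T_1^{(n)}\cdots T_1^{(1)}a=V_\dotmark^{-1}\gcirc_0 a\gcirc_0 V_\dotmark$, and a one-line Leibniz computation with $D^\wbdot_0\exp(-\ibh\eta)=\rmd\exp(-\ibh\eta)=-\ibh F\exp(-\ibh\eta)$ gives $D^\wbdot_0 V_\dotmark=0$. The delicate point, and the main obstacle, is that $\exp(-\ibh\eta)$ and hence $V_\dotmark$ must be a genuine (non-extended) flat section, which requires controlling negative powers of $h$. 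This is where the scalar/degree structure is decisive: each $\lambda^{(i)}$ is a scalar $1$-form of degree $\geq 2$ — the degree-$0$ and degree-$1$ parts of $D^\wbdot_t\hdot-\gtder(\gend{\Gamma}_t^\wbdot+\gamdot)$ vanish, and a scalar $1$-form carries only even degrees — so $\lambda^{(i)}=O(h)$ and $F=O(h)$. Hence $\ibh\eta=O(1)$ and $\exp(-\ibh\eta)$ is a legitimate formal series; one then checks that the extended factor $U_\dotmark$ does not spoil this, its symbol $U_\dotmark|_{y=0}$ being an ordinary series, so that the flat $V_\dotmark$ it determines indeed lies in $\wbund^{D_0}_\dotmark$.
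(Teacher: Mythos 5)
Your proposal is correct and follows essentially the same route as the paper's proof: compose the endpoints via (\ref{ucomp}) into a single conjugation by $U_\dotmark$, use lemma \ref{pushedconnslmm} together with (\ref{mltplpshes}) and (\ref{pushedcoefdef}) to identify $U_\dotmark^{-1}\gcirc_0(D^\wbdot_0 U_\dotmark)$ with $\ibh\sum_i\int_0^1\lambda^{(i)}$, show this scalar $1$-form is closed by telescoping the curvatures via (\ref{dlambdacond}), and rescale locally by the central factor $e^{-\ibh f}$, with the degree argument on $\lambda^{(i)}$ controlling negative powers of $h$ exactly as in the paper's footnote. Your parenthetical alternative derivation of scalarness via lemma \ref{scalarcommlmm} is a small self-contained addition, but the substance of the argument coincides with the paper's.
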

\begin{proof}
For sake of more compact notation let us define $D^\wbdot_n=D^\wbdot_0$ and let $\gcirc_n$ stand for $\gcirc_0$. To construct $V_\dotmark$ we proceed as follows. Isomorphisms under consideration can be written as $T_1^{(i)}a = 
\big(U^{(i)}_\dotmark\big)^{-1} \gcirc_i a \gcirc_i U^{(i)}_\dotmark$. For their composition formula (\ref{ucomp}) yields $T_1^{(n)} \dots T_1^{(1)} a = U^{-1}_\dotmark \gcirc_0 a \gcirc_0 U_\dotmark$ with
\begin{equation*}
U_\dotmark=g_0^{-1}g_1 U^{(1)}_\dotmark \gcirc_0 \dots \gcirc_0 g_0^{-1}g_{n-1} U^{(n-1)}_\dotmark \gcirc_0 U^{(n)}_\dotmark
\end{equation*}
We are going to show that $U_\dotmark$ can be locally rescaled by some scalar function and become a flat section with respect to $D^\wbdot_0$. For this purpose lemma \ref{pushedconnslmm} can be used. First, one immediately gets that $\big(T_1^{(i)} \dots T_1^{(1)}\big)_* D^\wbdot_0=D^{\wbdot}_i$. A bit less straightforward relation holds for local coefficients of these connections. Let $D^\wbdot_i= \gend{\rmd}_i + \ibh [ \gend{\Gamma}_i^\wbdot+ \gamma_{\dotmark \, i} \gcommt{i}\cdot\,]$. Using definition (\ref{pushedcoefdef}) we obtain
\begin{equation}
\label{autolmmcalc1}
\Big(T_1^{(n)} \dots T_1^{(1)}\Big)_* \Big(\gend{\Gamma}_0^\wbdot + \gamma_{\dotmark \, 0} \Big)=\gend{\Gamma}_0^\wbdot + \gamma_{\dotmark \, 0}  - \rmi h U^{-1}_\dotmark \gcirc_0 \big( D^\wbdot_0  U_\dotmark \big)
\end{equation}
On the other hand, since each isomorphism $T_t^{(i)}$ is generated by some Hamiltonian $H_{\dotmark\,i}(t)$ satisfying condition (\ref{hamiltoniancond}) with some scalar $1$-form $\lambda^{(i)} (t)$, lemma \ref{pushedconnslmm} can be consecutively applied
\begin{multline}
\label{autolmmcalc2}
T_{1*}^{(n)} \dots T_{1*}^{(2)} T_{1*}^{(1)} \Big(\gend{\Gamma}_0^\wbdot + \gamma_{\dotmark \, 0} \Big)= T_{1*}^{(n)} \dots T_{1*}^{(2)}\Big(\gend{\Gamma}_1^\wbdot + \gamma_{\dotmark \, 1} + \int_0^1 \lambda^{(1)} (\tau) d\tau\Big)\\
=\dots = T_{1*}^{(n)}\Big(\gend{\Gamma}_{n-1}^\wbdot + \gamma_{\dotmark \, n-1} + \int_0^1 \lambda^{(1)} (\tau) d\tau + \dots + \int_0^1 \lambda^{(n-1)} (\tau) d\tau\Big)\\
=\gend{\Gamma}_{0}^\wbdot + \gamma_{\dotmark \, 0} + \int_0^1 \lambda^{(1)} (\tau) d\tau + \dots + \int_0^1 \lambda^{(n)} (\tau) d\tau
\end{multline}
In virtue of (\ref{mltplpshes}) we can combine (\ref{autolmmcalc1}) with (\ref{autolmmcalc2}) and obtain
\begin{equation}
\label{autolmmcalc3}
- \rmi h U^{-1}_\dotmark \gcirc_0 \big( D^\wbdot_0  U_\dotmark \big)= \int_0^1 \lambda^{(1)} (\tau) d\tau + \dots + \int_0^1 \lambda^{(n)} (\tau) d\tau
\end{equation}
The right hand side of above formula is a closed $1$-form. Indeed, integration of relation (\ref{dlambdacond}) for each $\lambda^{(i)}$ gives
$\rmd \int_0^1 \lambda^{(i)} (\tau) d\tau = -\Omega^\wbdot_i + \Omega^\wbdot_{i-1}$, where $\Omega^\wbdot_i$ is the curvature of $D^\wbdot_i$. Summing all these terms produces
\begin{equation}
\rmd \Bigg( \int_0^1 \lambda^{(1)} (\tau) d\tau + \dots + \int_0^1 \lambda^{(n)} (\tau) d\tau \Bigg) = 0
\end{equation}
because $\Omega^\wbdot_n=\Omega^\wbdot_0$. Thus (\ref{autolmmcalc3}) can be always locally rewritten as 
\begin{equation}
\label{dconnucond}
 D^\wbdot_0  U_\dotmark =\ibh U_\dotmark  \rmd f
\end{equation}
for some scalar function $f$, such that $\rmd f=\int_0^1 \lambda^{(1)} (\tau) d\tau + \dots + \int_0^1 \lambda^{(n)} (\tau) d\tau$. Finally, we define\footnote{We should comment on the lack of negative powers of $h$ in the term $e^{-\ibh f}$. This fact comes quite easily from conditions $\deg(\hdot(t)) \geq 3$ and (\ref{hamiltoniancond}). Indeed, analyzing right hand side of relation  $\lambda(t)=D^\wbdot_t \hdot(t) - \gtder(\gend{\Gamma}_t^\wbdot+\gamdot(t))$ we observe that the term $\gtder(\gend{\Gamma}_t^\wbdot+\gamdot(t))$ does not introduce monomials of degree 0 if $\deg(\gamdot(t)) \geq 1$. (We want $\gamdot(t)$ to be a $C$-section. Hence, there could be only scalar terms of degree $0$, but such terms does not affect $D^\wbdot_t$ and can be safely omitted). The same stays true for the term $D^\wbdot_t \hdot(t)$ provided that $\deg(\hdot(t)) \geq 3$. Thus $\lambda(t)$ must be a scalar form of degree 2 or greater and consequently $f$ can be taken without zeroth power of $h$.
}
(still locally) $V_\dotmark=e^{-\ibh f} U_\dotmark$.
From (\ref{dconnucond}) it follows that $D^\wbdot_0 V_\dotmark=0$, while 
\begin{equation}
V_\dotmark^{-1} \gcirc_0 a \gcirc_0 V_\dotmark = U_\dotmark^{-1} \gcirc_0 a \gcirc_0 U_\dotmark = T_1^{(n)} \dots T_1^{(1)} a
\end{equation}
\end{proof}
Let us additionally observe, that -- by remark \ref{invhmtprmrk} -- in above lemma one (or more) isomorphism $T_1^{(i)}: \wbund^{D_{i-1}}_\dotmark \to \wbund^{D_i}_\dotmark$ can be safely replaced by some $T^{\prime -1}_1$, provided that $T'_1: \wbund^{D_{i}}_\dotmark \to \wbund^{D_{i-1}}_\dotmark$.
\subsection{Seiberg-Witten map}
As a quite direct application of above considerations on isomorphisms of Fedosov algebras one can make the following observation concerning Seiberg-Witten map \cite{seibwitt}, \cite{jurco1}, \cite{dobrski-sw}. Suppose that there is some fiberwise adjoint Lie group action in the endomorphism bundle, realized by invertible elements of the fiber, ie. one is dealing with $\adjact_x : \ebund_x \to \ebund_x$ given smoothly for all $x \in \sympman$ and acting by
\begin{equation*}
\adjact_x (A)=\mathfrak{g} A \mathfrak{g}^{-1}
\end{equation*}
for both $A$ and $\mathfrak{g}$ belonging to the fiber $\ebund_x$. Action of $\adjact$ can be extended naturally to the Weyl bundle $\wbe$. Let $D^\wbe$ be arbitrary Abelian connection with curvature $\Omega^\wbe$, and let $D^\wbe_T$ be some local Abelian connection with the same curvature, but originating in \emph{flat} connection $\connbund_T$ in the bundle $\bund$. Due to Fedosov isomorphism theory, one is able to establish local isomorphism $T$ mapping $\wbund^{D}_\ebund$ to $\wbund^{D_T}_\ebund$, with $\lambda(t)=0$ in theorem \ref{triv_trhm_liouv}. On the other hand, using gauge transformation $\adjact$, the Abelian connection 
\begin{equation}
{D^\wbe}'=\adjact_* D^\wbe = \adjact D^\wbe \adjact^{-1} 
\end{equation}
is obtained. It can be easily observed, that if $D^\wbe$ comes from theorem \ref{fedo_abel} for connection $\connend$ and section $\mue$, then ${D^\wbe}'$ also originates in this theorem with ${\Omega^\wbe}'=\Omega^\wbe$, but for the gauge transformed objects ${\connend}'= \adjact \connend \adjact^{-1}$ and ${\mue}'=\adjact \mue$. For ${D^\wbe}'$ one can also set up local isomorphism $T'$ mapping $\wbund^{D'}_\ebund$ to $\wbund^{D_T}_\ebund$, again with $\lambda'(t)=0$. Then the following question arises: how $T$ and $T'$ are related to each other?

To answer it one can use, \emph{mutatis mutandis}, lemma \ref{locadjlmm}. Consider composition $T' \adjact T^{-1}$ which maps $\wbund^{D_T}_\ebund$ to itself. We need to represent $\adjact$ as an endpoint of some homotopy of isomorphisms. Clearly, this cannot be homotopy generated strictly by theorem \ref{triv_trhm_liouv}. Notice however that if $\mathfrak{g}$ belongs to connected component of identity, ie{.} it can be written as $\mathfrak{g}=e^{\alpha}$ for some $\alpha \in C^{\infty}(\sympman,\ebund)$, then one can introduce homotopy $\adjacthmtp$ which acts on sections of $\wbe$ by $\adjacthmtp (a) = e^{\alpha t} a e^{-\alpha t}$. (Of course, the fiberwise product $\gcirc$ is constant with respect to $t$ here). This homotopy is generated by Hamiltonian $H=\rmi h \alpha$ of degree 2 and equation $\gtder a+\ibh [H \gcomm a]=0$ in analytic (not formal) sense. The relation (\ref{hamiltoniancond}) holds true with $D^\wbe_t=D^\wbe+[e^{\alpha t}D^\wbe e^{-\alpha t}\gcomm \cdot\,]$ and $\lambda(t)=0$. Also, the formula (\ref{pushedconnsectrel}) remains valid, thus we are able to repeat reasoning of lemma \ref{locadjlmm} with $\adjacthmtp$ put in place of some $T^{(i)}_t$. In particular, the local relation 
\begin{equation*}  
T' \adjact T^{-1} a =
V_\ebund^{-1} \gcirc a \gcirc V_\ebund 
\end{equation*}
holds true for $V_\ebund \in \wbund^{D_T}_\ebund$ and arbitrary $a \in \wbdot$. This however implies that
\begin{equation}
\label{sw_in_weylbund}  
T' \adjact a =
V_\ebund^{-1} \gcirc T a \gcirc V_\ebund 
\end{equation}
Projecting (\ref{sw_in_weylbund}) back to $\ebund$ and using relation $\adjact Q_\ebund = Q'_\ebund \adjact$ one obtains for $A \in C^\infty(\sympman,\ebund)$ and $A'=\adjact A= \mathfrak{g}A\mathfrak{g}^{-1}$
\begin{equation}
\label{sw_in_ebund}  
M' (A') =
G *_T M(A) *_T G^{-1} 
\end{equation}
where $M=Q_{T\,\ebund}^{-1} T Q_\ebund$, $M'=Q_{T\,\ebund}^{-1} T' Q'_\ebund$ and $G=Q_{T\,\ebund}^{-1} V_\ebund^{-1}$ with $Q_\ebund$, $Q'_\ebund$, $Q_{T\,\ebund}$ corresponding to $D^\wbe$, ${D^\wbe}'$ and $D^\wbe_T$ respectively. But (\ref{sw_in_ebund}) says that the covariance relation for map $M$ is exactly the one which should occur for Seiberg-Wiiten map applied on endomorphisms. Indeed, if one chooses frame in $\bund$ for which connection coefficients of $\connbund_T$ vanish, then $*_T$ acts on matrices representing endomorphisms in this frame according to the usual ``row-column'' matrix multiplication rule, but with commutative product of entries replaced by noncommutative Fedosov star product of functions (which can be chosen to be Moyal product for flat $\connsymp$). Notice also, that the leading term of $G$ is given by $\mathfrak{g}$ as it comes from our application of modified lemma \ref{locadjlmm}. Thus, one is able to rewrite (\ref{sw_in_ebund}) in more conventional notation as
\begin{equation}
\label{sw_in_ebund_conv}  
\widehat{A}' =
\widehat{\mathfrak{g}} *_T \widehat{A} *_T \widehat{\mathfrak{g}}^{-1} 
\end{equation}
with hat denoting appropriate Seiberg-Witten map. One can reproduce usual Seiberg-Witten equations for gauge objects (gauge potential and field strength) from formula (\ref{sw_in_ebund}). For this kind of considerations, as well as examples of explicit calculations of Seiberg-Witten map, the reader is referred to $\cite{dobrski-sw}$. Here, let us close this subsection with the statement that in our generalized context, Seiberg-Witten map again appeared as a local isomorphism, which is just some property of global Fedosov quantization of endomorphism bundle.  


\section{Trace functional}
We are ready to define trace functional for generalized Fedosov algebras. Fortunately, we already know that the trace exists for the case of fiberwise Moyal product, ie. for the case of original formulation of \cite{fedosovbook}. This fact can be used to avoid difficulties related to gluing together local trivializations. However, the isomorphisms theory of previous section remains crucial. Thus, in what follows the term ``isomorphism'' will always refer to isomorphism described by theorem \ref{triv_trhm_liouv}. Again, in this section $\dotmark$ stands for $\mathbb{C}$ or $\ebund$.

For arbitrary Abelian connection $D^\wbdot$ generated by theorem \ref{fedo_abel} with fiberwise product $\gcirc$, symplectic connection $\connsymp$, connection in the bundle $\connbund$, normalizing section $\mudot$ and curvature $\Omega^\wbdot$, let $D^\wbdot_F$ denote Abelian connection obtained for fiberwise Moyal product $\circ$ and remaining data unchanged. Thus $D^\wbdot_F$ is Abelian connection in the sense of \cite{fedosovbook}, generated by the first Fedosov theorem, and $Q_{\dotmark F}$ is corresponding quantization map. For a compactly supported section $a \in \wbund^{D_F}_\dotmark$ its Fedosov trace will be denoted $\tr_F a$. Recall that for the trivial (ie{.} Moyal) algebras the trace is just integral\footnote{Here we omit normalizing constant $(2\pi h)^{-n}$ in front of integral, as compared to \cite{fedosovbook}.} $\tr_M a=\int_{\mathbb{R}^{2n}} Q_{\dotmark M}^{-1}(a) \frac{\omega^n}{n!}$. For general case one introduces local isomorphisms $T_i$ to trivial algebra and corresponding partition of unity $\rho_i$. The trace is then defined as $\tr_F a =\sum_i \tr_M T_i(Q_{\dotmark F}(\rho_i) \circ a)$. One can show that such definition depends neither on the choice of $T_i$ nor $\rho_i$. The important properties of the trace $\tr_F$ are that
\begin{equation}
\label{tracecomm}
\tr_F(a\circ b)=\tr_F(b\circ a)
\end{equation}
for all $a,b \in  \wbund^{D_F}_\dotmark$, and that
\begin{equation}
\label{traceiso}
\tr'_{F} T a = \tr_F a
\end{equation}
provided that $T$ maps one Fedosov algebra $\wbund^{D_F}_\dotmark$ with trace $\tr_F$, to another $\wbund^{D'_F}_\dotmark$ with trace $\tr'_{F}$. Notice however, that as to this point $T$ must denote isomorphism which results from theorem \ref{triv_trhm_liouv} with constant homotopy of fiberwise Moyal products. This is because (\ref{traceiso}) has been proven in \cite{fedosovbook} only in such context. We are interested in properties of the trace under present, broader class of isomorphisms. With following lemma all the required relations can be easily obtained.
\begin{lmm}
\label{autotracelmm}
Let $D^\wbdot_F,D^\wbdot_1,\dots D^\wbdot_{n-1}$ be Abelian connections defined with respect to fiberwise products $\circ,\gcirc_1,\dots,\gcirc_{n-1}$ for $n \geq 1$. Consider homotopies of isomorphisms $T_t^{(i)}$, such that \linebreak $T_1^{(1)}: \wbund^{D_F}_\dotmark \to \wbund^{D_1}_\dotmark, T_1^{(2)}: \wbund^{D_1}_\dotmark \to \wbund^{D_2}_\dotmark, \dots, T_1^{(n)}: \wbund^{D_{n-1}}_\dotmark \to \wbund^{D_F}_\dotmark$. For their composition  $T_1^{(n)} \dots T_1^{(1)}$ and compactly supported $a \in \wbund^{D_F}_\dotmark$ the global relation
\begin{equation}
\tr_F T_1^{(n)} \dots T_1^{(1)} a = \tr_F a
\end{equation}
holds.
\end{lmm}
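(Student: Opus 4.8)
The plan is to regard the composite $\Phi := T_1^{(n)}\dots T_1^{(1)}$ as an automorphism of the Moyal-based Fedosov algebra $\wbund^{D_F}_\dotmark$, to render it \emph{locally inner} by means of lemma \ref{locadjlmm}, and then to pass to the trace through a partition of unity, invoking the cyclicity (\ref{tracecomm}) one chart at a time. First I would apply lemma \ref{locadjlmm} with the role of $D^\wbdot_0$ played by $D^\wbdot_F$ and with $\gcirc_0 = \circ$; it supplies a cover $\{U_\alpha\}$ of $\sympman$ and, on each $U_\alpha$, a section $V^\alpha$ that is flat with respect to $D^\wbdot_F$ there and satisfies $\Phi(a)\big|_{U_\alpha} = (V^\alpha)^{-1} \circ a \circ V^\alpha$ for every $a$ (the possible negative powers of $h$ carried by $V^\alpha$ are harmless, since the conjugation lands back in the honest Weyl bundle).

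Next, choose a partition of unity $\{\rho_\alpha\}$ subordinate to $\{U_\alpha\}$ and put $a_\alpha := Q_{\dotmark F}(\rho_\alpha) \circ a$. Since $\sum_\alpha Q_{\dotmark F}(\rho_\alpha) = Q_{\dotmark F}(1) = 1$ and the fiberwise product $\circ$ acts pointwise in the base variable $x$, each $a_\alpha$ is a flat section compactly supported in $U_\alpha$ and $\sum_\alpha a_\alpha = a$. Moreover every $T_t^{(i)}$ acts pointwise in $x$ as well (both $\gcirc_t$ and $g_t$ differentiate only in $y$, and $U_\dotmark(t)$ is a fixed section), so $\Phi$ preserves supports; hence $\Phi(a_\alpha)$ is supported in $U_\alpha$ and there equals $(V^\alpha)^{-1} \circ a_\alpha \circ V^\alpha$. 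By linearity of the trace,
\[
\tr_F \Phi(a) = \sum_\alpha \tr_F\big((V^\alpha)^{-1} \circ a_\alpha \circ V^\alpha\big).
\]

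The crucial step is then a \emph{localized} cyclicity: for $a_\alpha$ compactly supported in $U_\alpha$ and $V^\alpha$ flat on $U_\alpha$, I claim $\tr_F\big((V^\alpha)^{-1} \circ a_\alpha \circ V^\alpha\big) = \tr_F(a_\alpha)$. Writing the left side as $\tr_F\big((V^\alpha)^{-1} \circ b\big)$ with $b = a_\alpha \circ V^\alpha$, one wants to move $(V^\alpha)^{-1}$ to the right. Because $\circ$ is pointwise in $x$, both $(V^\alpha)^{-1} \circ b$ and $b \circ (V^\alpha)^{-1}$ are supported in $\mathrm{supp}(a_\alpha) \subseteq U_\alpha$, so by well-definedness of $\tr_F$ their traces may be computed through a single local isomorphism to the trivial algebra over $U_\alpha$; there the identity reduces to cyclicity of the Moyal trace, $\tr_M(A \circ B) = \int AB\,\tfrac{\omega^n}{n!} = \tr_M(B \circ A)$ for compactly supported $A,B$. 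Summing over $\alpha$ yields $\tr_F \Phi(a) = \sum_\alpha \tr_F(a_\alpha) = \tr_F(a)$.

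The main obstacle is exactly this last point. The conjugating section $V^\alpha$ produced by lemma \ref{locadjlmm} is only locally defined and is not a global element of $\wbund^{D_F}_\dotmark$, so the global cyclicity (\ref{tracecomm}) cannot be quoted verbatim. The required care is to confine the entire computation to a single chart, using the compact support of $a_\alpha$ together with the strictly fiberwise (pointwise-in-$x$) character of $\circ$, and only then to appeal to the Moyal trace and its cyclicity. The argument is uniform in $\dotmark \in \{\mathbb{C}, \ebund\}$, the $C$-section property being preserved at each stage, so the endomorphism case needs no separate treatment.
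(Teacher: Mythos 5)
Your proposal is correct and follows essentially the same route as the paper's own proof: apply lemma \ref{locadjlmm} to the composite automorphism of $\wbund^{D_F}_\dotmark$, decompose $a$ via a partition of unity as $\sum_i Q_{\dotmark F}(\rho_i)\circ a$, and use linearity together with the cyclicity (\ref{tracecomm}) to cancel the local conjugating sections $V_{\dotmark i}$. The only difference is that you spell out the justification of cyclicity for the merely locally defined $V^\alpha$ (confining the computation to one chart and reducing to the Moyal trace), a point the paper passes over by citing (\ref{tracecomm}) directly; this is a welcome clarification, not a divergence in method.
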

\begin{proof}
Clearly, $ T_1^{(n)} \dots T_1^{(1)}$ maps $\wbund^{D_F}_\dotmark$ to $\wbund^{D_F}_\dotmark$, hence lemma \ref{locadjlmm} can be applied. Let $\{\mathcal{O}_i \}$ be covering of the support of section $a$ (by compactness one can make this covering finite), such that in each $\mathcal{O}_i$ lemma \ref{locadjlmm} holds with some $V_{\dotmark i} \in \wbund^{D_F}_\dotmark$, and choose some compatible partition of unity $\sum_i \rho_i=1$. Consequently $\sum_i Q_{\dotmark F}(\rho_i)=1$. Then 
\begin{multline}
\tr_F T_1^{(n)} \dots T_1^{(1)} a=\tr_F  T_1^{(n)} \dots T_1^{(1)}\Big(\sum_i Q_{\dotmark F}(\rho_i) \circ a\Big)\\
=\sum_i\tr_F \Big( V^{-1}_{\dotmark i} \circ Q_{\dotmark F}(\rho_i) \circ a \circ V_{\dotmark i}\Big)
=\sum_i\tr_F \Big( Q_{\dotmark F}(\rho_i) \circ a \Big) = \tr_F a
\end{multline}
where linearity of $\tr_F$ and property (\ref{tracecomm}) have been used.
\end{proof}
Again, by remark \ref{invhmtprmrk}, above lemma covers also the case of appropriate inverses put in place of some isomorphisms $T_1^{(i)}$s. Now, let $T$ be ``endpoint'' of some homotopy of our generalized isomorphisms (ie. $\gcirc_t$ does not have to be equal to $\circ$ for each $t$) mapping $\wbund^{D_F}_\dotmark$ with trace $\tr_F$, to $\wbund^{D'_F}_\dotmark$ with trace $\tr'_{F}$. Then, there must exist isomorphism $T' : \wbund^{D_F}_\dotmark \mapsto \wbund^{D'_F}_\dotmark$ being ``endpoint'' of homotopy for which $\gcirc_t$ is constantly equal to Moyal product\footnote{This is because $D^\wbdot_F$ and $D^{\prime\wbdot}_F$ are generated by some input data for theorem \ref{fedo_abel}, but both with fiberwise Moyal product. We can homotopically tranform one set of such input data into another without modyfing fiberwise product and generate homotopy of Abelian connections, each with respect to $\circ$. Then, by theorem \ref{isocurvthrm}, the compatible Hamiltonian must exists since we already know that condition (\ref{dlambdacond}) can be fulfilled due to existence of $T$.}. Applying (\ref{traceiso}) for $T'$ and lemma \ref{autotracelmm} for $T^{\prime -1} T$ one obtains
\begin{equation}
\tr'_F T a =\tr'_F T' T^{\prime -1} Ta=\tr_F T^{\prime -1} Ta= \tr_F a
\end{equation}
Thus, relation (\ref{traceiso}) holds true for generalized isomorphisms.
 
\begin{dfn}
For arbitrary $\wbund^{D}_\dotmark$, corresponding to Abelian connection $D^\wbdot$ of theorem \ref{fedo_abel}, the trace of compactly supported $a \in \wbund^{D}_\dotmark$ is given by
\begin{equation}
\label{trdef}
\tr a =\tr_F T(a)
\end{equation}
where $T$ is arbitrary isomorphism mapping $\wbund^{D}_\dotmark$ to $\wbund^{D_F}_\dotmark$.
\end{dfn}
First, algebras $\wbund^{D}_\dotmark$ and $\wbund^{D_F}_\dotmark$ are indeed isomorphic, as guaranteed by remark \ref{alliso}. We should also check that above definition does not depend on particular choice of isomorphism $T$. Let $T'$ be some other isomorphism mapping $\wbund^{D}_\dotmark$ to $\wbund^{D_F}_\dotmark$.  Clearly, $T' T^{-1}$ maps $\wbund^{D_F}_\dotmark$ to $\wbund^{D_F}_\dotmark$, hence lemma \ref{autotracelmm} can be applied (together with remark \ref{invhmtprmrk}) yielding
\begin{equation}
\label{trdefcalc}
\tr_F T'(a)=\tr_F  T' T^{-1} T(a) = \tr_F T(a)
\end{equation}
Similar argument can be employed to show that relation (\ref{traceiso}) can be carried over to generalized trace. 
Indeed, consider arbitrary isomorphism $T_X$ mapping $\wbund^{D}_\dotmark$ with trace defined by $\tr a = \tr_F T(a)$ to $\wbund^{D'}_\dotmark$ with trace $\tr' a = \tr'_F T'(a)$. Since $\wbund^{D}_\dotmark$ and $\wbund^{D'}_\dotmark$ are isomorphic, then also $\wbund^{D_F}_\dotmark$ and $\wbund^{D'_F}_\dotmark$ are isomorphic, because, by definition, connections $D^\wbdot_F$ and $D^{\prime\wbdot}_F$  inherit curvatures from their generalized counterparts. Thus remark \ref{alliso} can be applied, and let $T_F : \wbund^{D_F}_\dotmark \to \wbund^{D'_F}_\dotmark$ be corresponding isomorphism. Then
\begin{equation}
\tr' T_X a = \tr'_F T' T_X a =\tr'_F T_F T_F^{-1} T' T_X T^{-1} T a= \tr_F  T_F^{-1} T' T_X T^{-1} T a=\tr_F T a=\tr a 
\end{equation}
where we have first used (\ref{traceiso}) for $T_F$, and then lemma \ref{autotracelmm} for $T_F^{-1} T' T_X T^{-1}$.
Finally, the relation
\begin{equation}
\label{tracegencomm}
\tr(a\gcirc b)=\tr(b\gcirc a)
\end{equation}
for compactly supported sections $a,b \in  \wbund^{D}_\dotmark$ is the direct consequence of (\ref{tracecomm}).
\section{Some explicit expressions}
Our goal for this section is to give explicit expressions, up to second power of deformation parameter, for generalized star products and trace functional. They appear to be quite bulky, and calculations leading to them are cumbersome. In fact, it would be overwhelmingly tedious task to perform them manually. Instead, \emph{Mathematica} system has been used together with useful tensor manipulation package -- \emph{xAct} \cite{xact}. The interested reader can find the \emph{Mathematica} file relevant for these calculations on the author's website \cite{mathemfile}. This file also contains ``intermediate'' structures (Abelian connection, lifting to Weyl bundle, trivialization isomorphism), as well as some tests for validity of obtained results (including associativity of star product and isomorphicity of trivialization). The results are given for arbitrary fiberwise isomorphism (\ref{gdef}), but are restricted to the case of vanishing normalizing section $\mudot=0$, and curvature corrections $\kappa=0$. The main reason for this is that for more general cases of $\mudot$, the calculations become too time-consuming, even for computer algebra system. (Probably, some optimization of code developed in \cite{mathemfile} could be helpful in overcoming this obstacle). Also, we have decided to use already known \cite{fedosov-trace},\cite{dobrski-ncgr} formula  for $\tr_F$ to obtain expression for generalized trace, and this formula was derived for $\kappa=0$.

Thus, after all calculations, the following formula can be derived for the generalized Fedosov product of endomorphisms $A,B \in C^\infty (\sympman, \ebund)[[h]]$. (Here $\partial$ stands for covariant derivative which combines $\connsymp$ with $\connbund$, ie{.} acts by means of $\connsymp$ on (co)tangent space and by $\connend$ on endomorphisms. The shortened notation $\omega_{im}\tensor{\curvsymp}{^m_{jkl}}=\curvsymp_{ijkl}$ is also used).
\begin{multline}
\label{ABstar}
A*B=AB-\frac{\rmi}{2}h \Big( \omega^{a b} -4 \rmi g_{(1)}^{a b}\Big)\partial_a A \partial_b B 
+h^2\Bigg(
3 \Big( g_{(1)}^{p} g_{(1)}^{q r} -g_{(2)}^{p q r} \Big)\Big( \partial_{(p} A \partial_q \partial_{r)} B + \partial_{(p} \partial_q A \partial_{r)} B  \Big)
\\
 + \Big(g_{(1)}^{p} g_{(1)}^{q}  - 2 g_{(2)}^{p q} 
+  \big(g_{(1)}^{q r} g_{(1)}^{s a} \omega^{pb}    
+ g_{(1)}^{p r} g_{(1)}^{s a}  \omega^{qb}\big) \Big( \frac{1}{3} \curvsymp_{rsab}   
+ \frac{7}{6}  \curvsymp_{sarb} \Big)  + \frac{\rmi}{2} g_{(1)}^{r s} \curvsymp_{rsab} \omega^{pa} \omega^{qb}  
\\
+ \curvsymp_{sarb} \big( g_{(2)}^{q s a b}  \omega^{pr} + g_{(2)}^{p s a b}  \omega^{qr} \big) + \frac{\rmi}{2} \big(\omega^{qr}+4\rmi g_{(1)}^{q r}\big)  \partial_r g_{(1)}^{p}  - \frac{\rmi}{2} \big( \omega^{pr} - 4 \rmi g_{(1)}^{p r}\big) \partial_r g_{(1)}^{q}\Big) \partial_p A \partial_q B
\\
 + \frac{1}{8} \big( \omega^{ps} -4\rmi g_{(1)}^{ps}\big)\big(\omega^{qr} -4\rmi g_{(1)}^{qr} \big) \partial_p A \partial_q B \curvbund_{rs} 
+ \frac{1}{4} \big( \omega^{ps} -4\rmi g_{(1)}^{ps}\big)\big(\omega^{qr} +4\rmi g_{(1)}^{qr} \big) \partial_p A \curvbund_{rs} \partial_q B 
\\
+ \frac{1}{8} \big( \omega^{ps} +4\rmi g_{(1)}^{ps}\big)\big(\omega^{qr} +4\rmi g_{(1)}^{qr} \big) \curvbund_{rs}  \partial_p A \partial_q B     
\\
+ \frac{\rmi}{2} \partial_s g_{(1)}^{q r} \Big( \big(\omega^{ps} +4\rmi g_{(1)}^{p s} \big)  \partial_r \partial_q A \partial_p B -\big(\omega^{ps} -4\rmi g_{(1)}^{p s} \big)  \partial_p A \partial_r \partial_q B \Big)  
\\
+ \Big( 3 g_{(1)}^{(p s} g_{(1)}^{q r)} - 6 g_{(2)}^{p q r s} - \frac{1}{8}\big(\omega^{ps} -4\rmi g_{(1)}^{ps}\big)\big(\omega^{qr} -4\rmi g_{(1)}^{qr}\big)  \Big) \partial_{(p} \partial_{q)} A \partial_{(r} \partial_{s)} B 
\\
+\Big(2 g_{(1)}^{p q} g_{(1)}^{r s} - 4 g_{(2)}^{p q r s} \Big)  \big(\partial_{(p} \partial_q \partial_r A \partial_{s)} B +\partial_{(p} A \partial_q \partial_r \partial_{s)} B \big)
\Bigg)+O(h^3)
\end{multline}
It is  straightforward to obtain formula for star product of functions from above expression -- one has to put $\curvbund_{ij}\equiv 0$ and interpret $\partial$ as symplectic connection $\connsymp$. Hence, let us focus on $A*X$ and $X*f$, with $X \in C^\infty (\sympman, \bund)[[h]]$, $f \in C^\infty (\sympman, \mathbb{C})[[h]]$. Now, $\partial$ acts by means of $\connbund$ on $X$. It turns out that using (\ref{starproducts}) one gets relations which are formally very similar to that for $A*B$. In fact, the formula for $A*X$ looks like if $B$ was mechanically replaced by $X$ in (\ref{ABstar}), and than all meaningless terms (involving expressions like ``$\partial_p A \partial_q X \curvbund_{rs} $'') were dropped. The analogous statement holds also for $X*f$.  
\begin{multline}
\label{AXstar}
A*X=AX-\frac{\rmi}{2}h \Big( \omega^{a b} -4 \rmi g_{(1)}^{a b}\Big)\partial_a A \partial_b X 
+h^2\Bigg(
3 \Big( g_{(1)}^{p} g_{(1)}^{q r} -g_{(2)}^{p q r} \Big)\Big( \partial_{(p} A \partial_q \partial_{r)} X + \partial_{(p} \partial_q A \partial_{r)} X  \Big)
\\
 + \Big(g_{(1)}^{p} g_{(1)}^{q}  - 2 g_{(2)}^{p q} 
+  \big(g_{(1)}^{q r} g_{(1)}^{s a} \omega^{pb}    
+ g_{(1)}^{p r} g_{(1)}^{s a}  \omega^{qb}\big) \Big( \frac{1}{3} \curvsymp_{rsab}   
+ \frac{7}{6}  \curvsymp_{sarb} \Big)  + \frac{\rmi}{2} g_{(1)}^{r s} \curvsymp_{rsab} \omega^{pa} \omega^{qb}  
\\
+ \curvsymp_{sarb} \big( g_{(2)}^{q s a b}  \omega^{pr} + g_{(2)}^{p s a b}  \omega^{qr} \big) + \frac{\rmi}{2} \big(\omega^{qr}+4\rmi g_{(1)}^{q r}\big)  \partial_r g_{(1)}^{p}  - \frac{\rmi}{2} \big( \omega^{pr} - 4 \rmi g_{(1)}^{p r}\big) \partial_r g_{(1)}^{q}\Big) \partial_p A \partial_q X
\\
+ \frac{1}{4} \big( \omega^{ps} -4\rmi g_{(1)}^{ps}\big)\big(\omega^{qr} +4\rmi g_{(1)}^{qr} \big) \partial_p A \curvbund_{rs} \partial_q X 
+ \frac{1}{8} \big( \omega^{ps} +4\rmi g_{(1)}^{ps}\big)\big(\omega^{qr} +4\rmi g_{(1)}^{qr} \big) \curvbund_{rs}  \partial_p A \partial_q X     
\\
+ \frac{\rmi}{2} \partial_s g_{(1)}^{q r} \Big( \big(\omega^{ps} +4\rmi g_{(1)}^{p s} \big)  \partial_r \partial_q A \partial_p X -\big(\omega^{ps} -4\rmi g_{(1)}^{p s} \big)  \partial_p A \partial_r \partial_q X \Big)  
\\
+ \Big( 3 g_{(1)}^{(p s} g_{(1)}^{q r)} - 6 g_{(2)}^{p q r s} - \frac{1}{8}\big(\omega^{ps} -4\rmi g_{(1)}^{ps}\big)\big(\omega^{qr} -4\rmi g_{(1)}^{qr}\big)  \Big) \partial_{(p} \partial_{q)} A \partial_{(r} \partial_{s)} X 
\\
+\Big(2 g_{(1)}^{p q} g_{(1)}^{r s} - 4 g_{(2)}^{p q r s} \Big)  \big(\partial_{(p} \partial_q \partial_r A \partial_{s)} X +\partial_{(p} A \partial_q \partial_r \partial_{s)} X \big)
\Bigg)+O(h^3)
\end{multline}

\begin{multline}
\label{Xfstar}
X*f=Xf-\frac{\rmi}{2}h \Big( \omega^{a b} -4 \rmi g_{(1)}^{a b}\Big)\partial_a X \partial_b f 
+h^2\Bigg(
3 \Big( g_{(1)}^{p} g_{(1)}^{q r} -g_{(2)}^{p q r} \Big)\Big( \partial_{(p} X \partial_q \partial_{r)} f + \partial_{(p} \partial_q X \partial_{r)} f  \Big)
\\
 + \Big(g_{(1)}^{p} g_{(1)}^{q}  - 2 g_{(2)}^{p q} 
+  \big(g_{(1)}^{q r} g_{(1)}^{s a} \omega^{pb}    
+ g_{(1)}^{p r} g_{(1)}^{s a}  \omega^{qb}\big) \Big( \frac{1}{3} \curvsymp_{rsab}   
+ \frac{7}{6}  \curvsymp_{sarb} \Big)  + \frac{\rmi}{2} g_{(1)}^{r s} \curvsymp_{rsab} \omega^{pa} \omega^{qb}  
\\
+ \curvsymp_{sarb} \big( g_{(2)}^{q s a b}  \omega^{pr} + g_{(2)}^{p s a b}  \omega^{qr} \big) + \frac{\rmi}{2} \big(\omega^{qr}+4\rmi g_{(1)}^{q r}\big)  \partial_r g_{(1)}^{p}  - \frac{\rmi}{2} \big( \omega^{pr} - 4 \rmi g_{(1)}^{p r}\big) \partial_r g_{(1)}^{q}\Big) \partial_p X \partial_q f
\\
+ \frac{1}{8} \big( \omega^{ps} +4\rmi g_{(1)}^{ps}\big)\big(\omega^{qr} +4\rmi g_{(1)}^{qr} \big) \curvbund_{rs}  \partial_p X \partial_q f     
\\
+ \frac{\rmi}{2} \partial_s g_{(1)}^{q r} \Big( \big(\omega^{ps} +4\rmi g_{(1)}^{p s} \big)  \partial_r \partial_q X \partial_p f -\big(\omega^{ps} -4\rmi g_{(1)}^{p s} \big)  \partial_p X \partial_r \partial_q f \Big)  
\\
+ \Big( 3 g_{(1)}^{(p s} g_{(1)}^{q r)} - 6 g_{(2)}^{p q r s} - \frac{1}{8}\big(\omega^{ps} -4\rmi g_{(1)}^{ps}\big)\big(\omega^{qr} -4\rmi g_{(1)}^{qr}\big)  \Big) \partial_{(p} \partial_{q)} X \partial_{(r} \partial_{s)} f 
\\
+\Big(2 g_{(1)}^{p q} g_{(1)}^{r s} - 4 g_{(2)}^{p q r s} \Big)  \big(\partial_{(p} \partial_q \partial_r X \partial_{s)} f +\partial_{(p} X \partial_q \partial_r \partial_{s)} f \big)
\Bigg)+O(h^3)
\end{multline} 
In order to calculate the generalized trace functional, one could use definition (\ref{trdef}). It requires computation of isomorphism $T:\wbund^{D}_\dotmark \to \wbund^{D_F}_\dotmark$ and this is again laborious task, which should be gladly given to some computer algebra system. As in the case of star products we refer interested reader to the \emph{Mathematica} file \cite{mathemfile} for details. Here only the final result is presented. 
\begin{multline}
\label{trgenres}
\tr Q_\ebund (A) = \int_\sympman \Tr \Bigg( 
A + h \left(\frac{\rmi}{2} \curvbund_{ab} \omega^{ab} -  \partial_a g_{(1)}^{a} +  \partial_b \partial_a g_{(1)}^{a b}\right) A 
+ h^2 \Bigg( 
-  \frac{3}{8}   \curvbund_{ab} \curvbund_{cd} \omega^{[ab} \omega^{cd]} 
\\
 - \frac{\rmi}{2}  \omega^{bc}  \partial_a \left(g_{(1)}^{a} \curvbund_{bc} \right) - \rmi  \omega^{bc} \partial_c \left(g_{(1)}^{a} \curvbund_{ab} \right)  + \frac{\rmi}{2} \omega^{cd}  \partial_b\partial_a \left(g_{(1)}^{a b} \curvbund_{cd} \right) + \frac{\rmi}{4} \omega^{cd}  \partial_b \partial_d \left(g_{(1)}^{a b} \curvbund_{ac} \right) 
\\
- \rmi \omega^{cd}  \partial_d \left(g_{(1)}^{a b}  \partial_b \curvbund_{ac}\right) + \frac{3\rmi}{4} \omega^{cd} \partial_d \partial_b \left(g_{(1)}^{a b} \curvbund_{ac} \right) +  \frac{1}{48}   \omega^{ab} \omega^{cd} \omega^{ep} \partial_d \partial_b \curvsymp_{acep}
\\
+ \frac{1}{64}   \tensor{\curvsymp}{^{d}_{aeq}} \tensor{\curvsymp}{^{a}_{dpr}}  \omega^{[eq} \omega^{pr]}
 +   \partial_b \partial_p \left(g_{(2)}^{c d e p} \curvsymp_{cdae} \omega^{ab} + g_{(1)}^{a b} g_{(1)}^{c d}\left( \frac{3}{2} \curvsymp_{acde} + \frac{1}{2} \curvsymp_{cead} \right)\omega^{ep}\right)  
\\
- \omega^{de} \partial_e \left(g_{(1)}^{a} g_{(1)}^{b c} \left( \curvsymp_{abcd} + \curvsymp_{bcad} 	\right)  + g_{(2)}^{ a b c} \curvsymp_{cadb} \right) 
- \omega^{ep} \partial_p \left(g_{(2)}^{c b a d}  \partial_d \curvsymp_{cbea}+g_{(1)}^{a b} g_{(1)}^{c d}  \partial_d \curvsymp_{abce}\right)
\\
+ \omega^{cd} \partial_d \left(g_{(1)}^{a b}  \left(\frac{2}{3} \curvsymp_{cepb}   \partial_a g_{(1)}^{e p} +\frac{1}{2} \curvsymp_{aebp}  \partial_c g_{(1)}^{e p}\right) \right)    
 -   \partial_a g_{(2)}^{a}  -   \partial_b \left(g_{(1)}^{a} \partial_a g_{(1)}^{b}\right) 
\\
- \partial_a \left(g_{(1)}^{b c} \partial_c \partial_b g_{(1)}^{a}\right) 
  +   \partial_a \partial_c \left(g_{(1)}^{b} \partial_b g_{(1)}^{a c}+2 g_{(1)}^{b c} \partial_b g_{(1)}^{a}\right)
  - \frac{1}{3}    \partial_b \partial_d \partial_c \left(2 g_{(1)}^{a b} \partial_a g_{(1)}^{c d} + 4g_{(1)}^{a c} \partial_a g_{(1)}^{b d}\right)   
\\
 +   \partial_d \partial_c \left(g_{(1)}^{a b} \partial_b \partial_a g_{(1)}^{c d}\right) 
+ \partial_a \partial_b g_{(2)}^{a b} - \partial_a \partial_b \partial_c g_{(2)}^{a b c}
 +   \partial_a \partial_b \partial_c \partial_d g_{(2)}^{a b c d} 
\Bigg)A +O(h^3)
\Bigg)\frac{\omega^n}{n!}
\end{multline}
As in the case of star product, the trace of function can be obtained from right hand side of above formula, by dropping terms involving curvature $\curvbund_{ij}$, and replacing $A$ by function.

\subsection{Example -- symmetric part of noncommutativity tensor}
From the general structure of definition (\ref{gdef}) of $g$ it follows that our setting covers the case of ``Moyal product with nonsymmetric $\omega^{ij}$''. More precisely, if one demands that $\gcirc$ is given by (\ref{moyal}) with $\omega^{ij}$ replaced by $m^{ij}=\omega^{ij}+g^{ij}$ for antisymmetric (and with symplectic inverse) $\omega^{ij}$ and symmetric $g^{ij}$, then $g$ can be taken as
\begin{equation}
g=\exp \left(\frac{\rmi h}{4} g^{ij} \frac{\partial}{\partial y^i}\frac{\partial}{\partial y^j}\right)
\end{equation}
This means that, up to $h^2$, the only nonzero coefficients $g_{(s)}^{i_1 \dots i_k}$ are
\begin{equation}
\label{symsubst}
g_{(1)}^{i j}= \frac{\rmi}{4} g^{ij} \quad \text{and} \quad g_{(2)}^{i j k l}= -\frac{1}{32} g^{(ij} g^{kl)}
\end{equation}
Two cases can be considered.
\subsubsection{Case of arbitrary symplectic connection}
If there are no further assumptions about $\connsymp$, then after substitution (\ref{symsubst}), and due to symmetries of $\curvsymp_{ijkl}$, the formula (\ref{ABstar}) can be brought to the following form.
\begin{multline}
A*B=AB-\frac{\rmi}{2}h \Big( \omega^{a b} + g^{a b}\Big)\partial_a A \partial_b B 
+h^2\Bigg(
-\frac{1}{8} g^{r s} \curvsymp_{r s a b} \Big(
\omega^{a (p} g^{q) b} + \omega^{pa} \omega^{qb}  
 \Big) \partial_p A \partial_q B
\\
 + \frac{1}{8} \big( \omega^{ps} + g^{ps}\big)\big(\omega^{qr} + g^{qr} \big) \partial_p A \partial_q B \curvbund_{rs} 
+ \frac{1}{4} \big( \omega^{ps} + g^{ps}\big)\big(\omega^{qr} - g^{qr} \big) \partial_p A \curvbund_{rs} \partial_q B 
\\
+ \frac{1}{8} \big( \omega^{ps} - g^{ps}\big)\big(\omega^{qr} - g^{qr} \big) \curvbund_{rs}  \partial_p A \partial_q B     
- \frac{1}{8} \partial_s g^{q r} \Big( \big(\omega^{ps} - g^{p s} \big)  \partial_r \partial_q A \partial_p B -\big(\omega^{ps} + g^{p s} \big)  \partial_p A \partial_r \partial_q B \Big) 
\\ 
- \frac{1}{8}\big(\omega^{ps} + g^{ps}\big)\big(\omega^{qr} + g^{qr}\big)   \partial_{(p} \partial_{q)} A \partial_{(r} \partial_{s)} B 
\Bigg)+O(h^3)
\end{multline}
The trace functional reads now
\begin{multline}
\tr Q_\ebund (A) = \int_\sympman \Tr \Bigg( 
A +  \frac{\rmi h}{2}\left( \curvbund_{ab} \omega^{ab} +  \frac{1}{2}\partial_b \partial_a g^{a b}\right) A 
+ h^2 \Bigg( 
-  \frac{3}{8}   \curvbund_{ab} \curvbund_{cd} \omega^{[ab} \omega^{cd]} 
\\
  - \frac{1}{8} \omega^{cd}  \partial_b\partial_a \left(g^{a b} \curvbund_{cd} \right) - \frac{1}{16} \omega^{cd}  \partial_b \partial_d \left(g^{a b} \curvbund_{ac} \right) 
+\frac{1}{4} \omega^{cd}  \partial_d \left(g^{a b}  \partial_b \curvbund_{ac}\right) 
\\
- \frac{3}{16} \omega^{cd} \partial_d \partial_b \left(g^{a b} \curvbund_{ac} \right) 
+  \frac{1}{48}   \omega^{ab} \omega^{cd} \omega^{ep} \partial_d \partial_b \curvsymp_{acep}
+ \frac{1}{64}   \tensor{\curvsymp}{^{d}_{aeq}} \tensor{\curvsymp}{^{a}_{dpr}}  \omega^{[eq} \omega^{pr]}
\\
-\frac{1}{32}  \partial_b \partial_p \left( g^{(c d}g^{ e p)} \curvsymp_{cdae} \omega^{ab} + g^{a b} g^{c d}\left( 3 \curvsymp_{acde} +  \curvsymp_{cead} \right)\omega^{ep}\right)  
\\
+\frac{1}{32} \omega^{ep} \partial_p \left(g^{(c b}g^{ a d)}  \partial_d \curvsymp_{cbea}+2 g^{a b} g^{c d}  \partial_d \curvsymp_{abce}\right)
\\
-\frac{1}{16} \omega^{cd} \partial_d \left(g^{a b}  \left(\frac{2}{3} \curvsymp_{cepb}   \partial_a g^{e p} +\frac{1}{2} \curvsymp_{aebp}  \partial_c g^{e p}\right) \right)    
  + \frac{1}{48}    \partial_b \partial_d \partial_c \left(2 g^{a b} \partial_a g^{c d} + 4 g^{a c} \partial_a g^{b d}\right)   
  \\
 -\frac{1}{16}   \partial_d \partial_c \left(g^{a b} \partial_b \partial_a g^{c d}\right) 
 -\frac{1}{32}   \partial_a \partial_b \partial_c \partial_d \left( g^{(a b}g^{ c d)} \right) 
\Bigg)A +O(h^3)
\Bigg)\frac{\omega^n}{n!}
\end{multline}
\subsubsection{Case of symplectic connection preserving $g^{ij}$}
It can be furthermore demanded that symplectic connection preserves $g^{ij}$, ie. $\connsymp_k g^{ij}=0$.  Notice, that in this specific way we arrive nearly to generalizations considered in \cite{bering} and also to Wick-type setting of \cite{dolglyaksharap}. It is assumed there, that the fiberwise product is given by Moyal formula with nonsymmetric noncommutativity tensor $m^{ij}=\omega^{ij}+g^{ij}$, where $\omega^{ij}$ is nondegenerate, and that torsionfree connection in the tangent bundle preserves $m^{ij}$. This implies, that the connection preserves separately $\omega^{ij}$ and $g^{ij}$, and that inverse of $\omega^{ij}$ is symplectic. However, both \cite{bering} and \cite{dolglyaksharap} use definition of $\gbrc$ different from (\ref{gconnsqr}), setting it to $\gbrc=\brc=\frac{1}{4}\omega_{im}\tensor{\curvsymp}{^m_{jkl}}y^i y^j \rmd x^k \wedge \rmd x^l$. On the other hand, here $\gbrc=g^{-1}\brc$ which means that additional term, proportional to $h g^{ij} \omega_{im}\tensor{\curvsymp}{^m_{jkl}} dx^k \wedge dx^l$, appears. As this term is scalar, it does not change commutators. But when used \emph{outside} commutators, and this is the case of theorem \ref{fedo_abel}, it produces some input. Analyzing expressions (\ref{omeq1}) and (\ref{omeq2}) one can observe, that to restore full compatibility with \cite{bering} and \cite{dolglyaksharap} nonzero correction $\kappa$ proportional to $g^{ij} \omega_{im}\tensor{\curvsymp}{^m_{jkl}} dx^k \wedge dx^l$ must be introduced, and such a case is outside of the scope of this section. Notice also, that if $2$-form $g^{ij} \omega_{im}\tensor{\curvsymp}{^m_{jkl}} dx^k \wedge dx^l$ is cohomologically nontrivial, than the aforementioned correction $\kappa$ changes equivalence class of the star product. This observation is directly related to results on equivalence of Wick- and Weyl-type star products derived in \cite{dolglyaksharap}.

For the present setting one is able to simplify a bit the formula for the star product of endomorphisms
\begin{multline}
A*B=AB-\frac{\rmi}{2}h \Big( \omega^{a b} + g^{a b}\Big)\partial_a A \partial_b B 
+h^2\Bigg(
-\frac{1}{8} g^{r s} \curvsymp_{r s a b} \Big(
\omega^{a (p} g^{q) b} + \omega^{pa} \omega^{qb}  
 \Big) \partial_p A \partial_q B
\\
 + \frac{1}{8} \big( \omega^{ps} + g^{ps}\big)\big(\omega^{qr} + g^{qr} \big) \partial_p A \partial_q B \curvbund_{rs} 
+ \frac{1}{4} \big( \omega^{ps} + g^{ps}\big)\big(\omega^{qr} - g^{qr} \big) \partial_p A \curvbund_{rs} \partial_q B 
\\
+ \frac{1}{8} \big( \omega^{ps} - g^{ps}\big)\big(\omega^{qr} - g^{qr} \big) \curvbund_{rs}  \partial_p A \partial_q B     
- \frac{1}{8}\big(\omega^{ps} + g^{ps}\big)\big(\omega^{qr} + g^{qr}\big)   \partial_{(p} \partial_{q)} A \partial_{(r} \partial_{s)} B 
\Bigg)+O(h^3)
\end{multline}
and to bring the trace to somehow more friendly form 
\begin{multline}
\tr Q_\ebund (A) = \int_\sympman \Tr \Bigg( 
A +  \frac{\rmi h}{2} \curvbund_{ab} \omega^{ab} A 
+ h^2 \Bigg( 
-  \frac{3}{8}   \curvbund_{ab} \curvbund_{cd} \omega^{[ab} \omega^{cd]} 
\\
-   g^{a b} \omega^{cd} \left( \frac{1}{8} \partial_a \partial_b \curvbund_{cd}  
+ \frac{1}{16}   \partial_b \partial_d \curvbund_{ac}  
- \frac{1}{16}   \partial_d \partial_b \curvbund_{ac} \right) 
\\
+  \frac{1}{48}   \omega^{ab} \omega^{cd} \omega^{ep} \partial_d \partial_b \curvsymp_{acep}
+ \frac{1}{64}   \tensor{\curvsymp}{^{d}_{aeq}} \tensor{\curvsymp}{^{a}_{dpr}}  \omega^{[eq} \omega^{pr]}
\\
+ g^{a b} g^{c d} \omega^{e p} \left(-\frac{3}{32} \partial_d \partial_p \curvsymp_{a c b e} + \frac{1}{32} \partial_d \partial_p \curvsymp_{a e b c} + \frac{1}{16} \partial_p \partial_d \curvsymp_{a b c e}\right) 
\Bigg)A 
+O(h^3)
\Bigg)\frac{\omega^n}{n!}
\end{multline}
\section{Final remarks}
The general scheme for our approach (as well as for \cite{tosiekacta3}) was to establish isomorphism $g$ to standard fiberwise Moyal product, and then to pull-back \emph{almost} all structures required for Fedosov construction. One can immediately observe, that if we pull-back \emph{exactly} everything, then the result is strictly trivial -- the star products are precisely those of generic Fedosov quantization. The key point is that we do not modify mappings $\delta$ and $\delta^{-1}$ leaving ``Poincare decomposition'' untouched. However, the $\delta$ operator should remain $+1$-derivation for the new fiberwise product $\gcirc$. This can be achieved by imposing  condition 
$g \delta = \delta g $
, and definition (\ref{gdef}) is consistent with it. (The same condition causes the relation $\delta \gconn^\wbdot + \gconn^\wbdot \delta=0$ to hold true, which appears to be important in constructing Abelian connection).

The deformation quantization variant elaborated in this paper is isomorphic to the generic Fedosov quantization of \cite{fedosovbook}, as can be easily seen both by the very construction, and due to general results on deformation quantization on symplectic manifolds \cite{xu}. This however does not mean that presented results are trivial, especially from the point of view of possible applications in physics. Indeed, one can observe that introduced degrees of freedom enter rather nontrivially into formulas (\ref{ABstar}), (\ref{AXstar}), (\ref{Xfstar}) and (\ref{trgenres}). Hence, when employed into any model-building, they should result in some physical input (compare also with \cite{cahenflatogutt}). For instance, it could be quite interesting to study further consequences of obtained results in operator ordering problems, as in \cite{tosiekacta3}. The most straightforward (but not the only one) field-theoretic application can be conjectured from our example, with $g^{ij}$ given the meaning of a metric on the spacetime. The formalism derived here can be rather safely used for modeling noncommutativity of the spacetime, and such kind of considerations are planned to be covered in author's next paper. Notice however, that when passing to infinite-dimensional case of quantization of fields, our approach should be taken cautiously.
It is known that in such situation equivalence between different quantizations can be broken due to divergences, and this could imply that the mapping (\ref{gdef}) would become ill-defined (compare with \cite{dolglyaksharap} for more detailed comments on such issue).
\section{Acknowledgments}
I am grateful to professor Maciej Przanowski for his interest in the present paper.
This work was supported by the Polish Ministry of Science and Higher Education grant no.~IP20120220072 within Iuventus Plus programme.

\end{document}